\newtheorem{lemma}{Lemma}
\newtheorem{example}[lemma]{Example}
\newtheorem{definition}[lemma]{Definition}
\newtheorem{proposition}[lemma]{Proposition}
\newtheorem{theorem}{Theorem}
\newtheorem{remark}[lemma]{Remark}
\newtheorem{notation}[lemma]{Notation}
\newtheorem{corollary}[lemma]{Corollary}
\renewcommand{\_}{\rule{.6em}{.5pt}\hspace{0.023cm}}
\DeclareMathSymbol{:}{\mathbin}{operators}{"3A}
\newcommand{\C}{\mathcal C}
\newcommand{\D}{\mathcal D}
\newcommand{\E}{\mathcal E}
\newcommand{\F}{\mathcal F}
\renewcommand{\r}{\rightarrow}
\newcommand{\Id}{\mathrm{Id}}
\newcommand{\I}{\mathbb{I}}
\newcommand{\End}{\mathrm{End}}
\newcommand{\U}{\mathcal U}
\newcommand{\Hom}{\mathrm{Hom}}
\newcommand{\Cat}{\mathrm{Cat}}
\newcommand{\Lex}{\mathrm{Lex}}
\newcommand{\Clan}{\mathrm{Clan}}
\newcommand{\Mon}{\mathrm{Mon}}
\newcommand{\Gl}{\lambda}
\newcommand{\id}{\mathrm{id}}
\newcommand{\Set}{\mathrm{Set}}
\newcommand{\leftmultimap}{\rotatebox[origin=c]{180}{ $\multimap$ }}
\newcommand{\M}{{\mathcal M}}
\newcommand{\pullback}[1]{\underset{#1}{\times}}
\newcommand{\colim}{\mathrm{colim}}
\title{Notions of parametricity as monoidal models for type theory}
\date{\today}
\author{Hugo Moeneclaey\\
Universit\'e de Paris, \\
Inria Paris, CNRS, IRIF, \\
France\\
}
\begin{document}

\maketitle

\begin{abstract}
This article gives a solid theoretical grounding to the observation that cubical structures arise naturally when working with parametricity. Following \cite{moeneclaey2021parametricity}, we claim that cubical models are cofreely parametric.

We use categories, lex categories or clans as models of type theory. In this context we define notions of parametricity as monoidal models, and parametric models as modules. This covers not only the usual parametricity where any type comes with a relation, but also variants where it comes with a predicate, a reflexive relation, two relations, and many more.

In this setting we prove that forgetful functors from parametric models to arbitrary ones have left and right adjoints. Moreover we give explicit compact descriptions for these freely and cofreely parametric models.

Then we give many examples of notion of parametricity, allowing to build the following as cofreely parametric models:
\begin{itemize}
\item Categories of cubical objects for any variant of cube.
\item Lex categories of truncated semi-cubical (or cubical with reflexivities only) objects.
\item Clans of Reedy fibrant semi-cubical (or cubical with reflexivities only) objects.
\end{itemize}

\end{abstract}

\tableofcontents

\section*{Overview}

\subsection*{Introduction to type theory}

Type theories are foundational systems for constructive mathematics \cite{martin1975intuitionistic}. A type theory is given by its rules, which allow to build types (corresponding to sets and propositions) and terms in types (corresponding to elements in sets and proofs of propositions). 

In this article we adopt a semantical point view \cite{dybjer1995internal,hofmann1997syntax}. This means that we will study models of type theory. Such a model consists of notions of types and terms obeying the rules of the given type theory. From this point of view the syntax of type theory constitutes the simplest model, technically the initial model. 

Many definitions of model have been proposed in the literature, but many of them have two important common features:
\begin{itemize}
\item They are generalised algebraic notions \cite{cartmell1986generalised}. This means that we have a natural notion of morphism between models, forming a locally presentable category \cite{adamek1994locally}. These categories are well-behaved, for example all forgetful functors between locally presentable categories always have left adjoints.
\item A model of type theory has an underlying category. This allows to translate many categorical intuitions and techniques to models of type theory.
\end{itemize}
It is not uncommon to ask for enough structure in a model of type theory so that it can reasonably be considered as a place where all (constructive) mathematics can happen. In this article we will not consider so much structure in models, focusing on modeling types and dependencies between them.

\subsection*{Introduction to parametricity and cubes}

Parametricity was first introduced as a syntactic propriety of system F \cite{reynolds1983types}. It can be summarised as the observation that any type comes with a relation and any term preserves these relations. This is proved by mutual induction on types and terms. 

Intuitively, this means that terms in system F treat their inputs uniformly. For example it implies that any term of type $X\r X$ with $X$ a type variable is extensionally equal to the identity function \cite{wadler1989theorems}. This techniques has been extended from system F to type theory \cite{takeuti2001theory,bernardy2010parametricity,keller2012parametricity}.

We will consider parametric models, which are models with:
\begin{itemize}
\item For each type $A$ a relation $A_*$ over $A$.
\item For each term $t$ a new term $t_*$ witnessing that $t$ preserves the relevant relations.
\end{itemize}
Moreover $A_*$ and $t_*$ should obey the equations inductively defining parametricity. It is a known phenomena that when attempting to build parametric models, cubical structures arise:
\begin{itemize}
\item A model for (a variant of) parametricity was given by Bernardy, Coquand and Moulin using (a variant of) cubical sets \cite{bernardy2015presheaf}.
\item Johann and Sojakova defined so-called \emph{higher-dimensional parametricity} using cubes in \cite{johann2017cubical}, 
\item Cavallo and Harper gave a cubical type theory (i.e. a type theory inspired by cubical models) supporting some kind of parametricity \cite{cavallo2020internal}.
\end{itemize} 
In this article we will explain and illustrate this phenomena.

\subsection*{Content}

Many variants of parametricity can be considered, where any type comes with something else than a relation. In this article we show that for many such variant of parametricity we have a string of adjoint functors: 

\[\xymatrix{
\{\mathrm{Parametric\ models}\}\ar[rr]^{\mathrm{forgetful}} & & \ar@/^-2pc/[ll]_{\mathrm{free\ models}}\ar@/^2pc/[ll]^{\mathrm{cubical\ models}}\{\mathrm{Models\ of\ type\ theory}\}
}\]

\noindent as in \cite{moeneclaey2021parametricity}. We provide a few notions of parametricity with their corresponding cubical structure, to give a bit of intuition:


\[\begin{tabular}{|c|c|c|}
\hline
Parametricty & Relation &  Semi-cubes\\
\hline
Internal parametricity & Reflexive relations & Cubes\\
\hline
Unary parametricity & Predicate & Augmented semi-simplices\\
\hline
Bi-parametricity & Two relations & Semi-bicubes\\
\hline
\end{tabular}\]

We will consider three notions of model for type theory:

\begin{itemize}
\item Plain categories, without any extra structure. This is usually not considered as a model of type theory. Nevertheless we believe that this example is crucial in providing an introduction to parametricity for categorically-minded readers.
\item Left exact categories (lex categories for short). They are particularly simple models for type theory with $\Sigma$-types and a unit type.
\item Clans \cite{joyal2017notes}, generalising lex categories. They model type theory with $\Sigma$-types and a unit, where a type $A$ over $\Gamma$ is fully determined by its weakening: 
\begin{eqnarray}
(\Gamma,A) &\r& \Gamma
\end{eqnarray} 
A clan is a category with a class of map called fibration obeying a few of the axioms of model categories, so they will not look too unfamiliar to homotopically-minded reader. 
\end{itemize}

We give an early disclaimer on conceivable extensions:

\begin{itemize}
\item 
We do not foresee any trouble when extending the consideration presented here to categories with families, comprehension categories and similar notions of models of type theory.
\item We conjecture that extensions to models with more inductive types are fine, provided that we have enough inductive types to define all the parametricity relations involved.
\item Having $\Pi$-types and universes lead to serious trouble, and we do not attempt to treat them here.
\end{itemize}

Now we give a precise plan:
\begin{itemize}
\item In Section \ref{Categories} we define the usual parametricity (where every type comes with a relation) for categories. Motivated by a reformulation of this definition, we define a notion of parametricity as a monoidal category, and a parametric model as a module. We see that the monoidal category correponding to the usual parametricity is the (opposite of the) category of semi-cubes.
\item In Section \ref{Axiomatisation} we axiomatise the situation from Section \ref{Categories}, by asking for a symmetric monoidal closed category of models. Then we define a notion of parametricity as a monoid in this category (that is a model of type theory with a compatible monoidal structure on its underlying category). Moreover we define parametric models as modules, and we give compact descriptions for the left and right adjoints to the functor forgetting the module structure.
\item In Section \ref{LexCategories} we present notions of parametricity for lex categories as monoids for a symmetric monoidal closed structure on the category of lex categories. We show that notions of parametricity for categories can be extended to lex categories. Moreover we show that non-iterated parametricities can be defined for lex categories, giving rise to truncated semi-cubical (or cubical with reflexivities only) objects, for example graphs. 
\item In Section \ref{Clans} we present notions of parametricity for clans. We show how notions of parametricity for categories can be extended to clans, giving rise to various cubical objects with pointwise fibrations. Moreover we show that clans of semi-cubical (or cubical with reflexivities only) objects with Reedy fibrations are cofreely parametric for suitable notions of parametricity.
\end{itemize}

\section{Parametricity for categories}

\label{Categories}

This preliminary section is meant to motivate the abstract axiomatisation of the next section, by explaining the reasoning leading to it. We consider plain categories as models for type theory, and examine parametricity for them very closely.

\subsection{A first definition}

Recall the most basic form of parametricity is when any type comes with a relation, and any term preserves these relations. First we translate this as directly as possible to a category. 

\begin{definition}
A category $\C$ is parametric if we are given:
\begin{itemize}
\item An endofunctor: 
\begin{eqnarray}
\__* &:& \C\r \C
\end{eqnarray}
\item For any $X:\C$ two morphisms:
\begin{eqnarray}
d^0_X,d^1_X &:& X_* \r X
\end{eqnarray}
natural in $X$.
\end{itemize}
\end{definition}

\begin{remark}
This definition reflects our intuition on parametricity. Indeed:
\begin{itemize} 
\item For $Xa:\C$ the data of:
\begin{eqnarray}
d^0_X,d^1_X &:& X_*\r X
\end{eqnarray}
is a relation over $X$ internal to $\C$.
\item The fact that $\__*$ is a functor with $d^0$ and $d^1$ natural precisely means that morphisms in $\C$ preserve the relations.
\end{itemize}
\end{remark}

Given an object $X$ in a parametric category $\C$, we can iterate $\__*$, building the following diagram:

\[\xymatrix{
X & X_{*}\ar@<-.5ex>[l]_{d^0_X}\ar@<.5ex>[l]^{d^1_X} & X_{**} \ar@<-.5ex>[l]_{d^0_{X_*}}\ar@<.5ex>[l]^{d^1_{X_*}} \ar@/^2.0pc/[l]^{(d^1_X)_*}\ar@/^-2.0pc/[l]_{(d^0_X)_*} & \cdots
}\]

This will turn out to be a semi-cubical objects in $\C$ with $X$ as its object of points. See Remark \ref{semiCubicalObject} for more details.


\begin{remark}
A notion of parametricity where $\__*$ can be applied as many times as we want is called iterated. We will tell more about non-iterated (also called truncated) notions of parametricity later. The most basic example is when $\__*$ can only be applied once, so that $X_{**}$ does not give any new information on $X$.
\end{remark}

\begin{example}
We give examples of parametric categories.
\begin{itemize}
\item We can define: 
\begin{eqnarray}
X_* &=& X\\
d^0_X(x) &=& x\\
d^1_X(x) &=& x
\end{eqnarray} 
Any type comes with its equality relation. This works in any category.
\item If the category has products, we can consider: 
\begin{eqnarray}
X_* &=& X\times X\\
d^0_X(x,y) &=& x\\
d^1_X(x,y) &=& y
\end{eqnarray}
Here any type comes with the trivial (i.e. always true) relation. 
\item If we have an initial object $\bot$ we can define:
\begin{eqnarray}
X_* &=& \bot
\end{eqnarray}
Any type comes with the empty (i.e. always false) relation.
\end{itemize}
In all these examples the maps:
\begin{eqnarray}
 (d^0_X,d^1_X) &:& X_*\r X\times X
 \end{eqnarray}
are monomorphisms, so that their fibers are sub-singletons. Assuming the law of excluded middle, there is no other such notions of parametricity on the category of sets.
\end{example}

\begin{example}
\label{exampleParametricityPath}
We give examples where the maps:
\begin{eqnarray}
 (d^0_X,d^1_X) &:& X_*\r X\times X
 \end{eqnarray}
 are not monomorphisms. 
\begin{itemize} 
\item In any cartesian closed category of spaces containing the unit interval $[0,1]$ in $\mathbb{R}$, we can define:
\begin{eqnarray}
X_* &=& X^{[0,1]}\\
d^0_X(p) &=& p(0)\\
d^1_X(p) &=& p(1)
\end{eqnarray}
For any points $x$ and $y$ in $X$, the fiber $X_*$ over $(x,y)$ is the space of paths from $x$ to $y$ in $X$. 

The cubical space constructed from any space $X$ by iterating $\__*$ is the usual cubical nerve of $X$ with its natural topology.
\item This can be generalised to any cartesian closed category with an object $I$ and two maps:
\begin{eqnarray}
0,1&:&\top\r I
\end{eqnarray}
by defining:
\begin{eqnarray}
X_* &=& X^I\\
d^0_X &=& X^I\overset{X^0}\r X^\top\cong X\\
d^1_X &=& X^I\overset{X^1}\r X^\top\cong X
\end{eqnarray}
\item Assuming a functorial factorisation system in a category $\C$, we can define a parametricity for $\C$ by factoring diagonals as follows:
\[\xymatrix{
X\ar[r] & X_*\ar[rr]^{(d^0_X,d^1_X)} & & X\times X
}\]
\end{itemize}
\end{example}

\subsection{Reformulation using semi-cubes}

Recall the following description for category of semi-cubes:

\begin{definition}
\label{definitionSquare}
Let $\square$ is the free strict monoidal category generated by:
\begin{itemize}
\item An object $\I$.
\item Two morphisms:
\begin{eqnarray}
d^0,d^1 &:& \I\r 1
\end{eqnarray}
where $1$ is the monoidal unit.
\end{itemize}
\end{definition}

The reader unfamiliar with semi-cubical objects can take the following as a definition.

\begin{proposition}
\label{squareMonoidal}
A functor from $\square$ to $\C$ is a semi-cubical object in $\C$.
\end{proposition}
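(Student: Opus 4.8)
The plan is to compute the underlying ordinary category of $\square$ explicitly by generators and relations, and then to match this presentation against the standard combinatorial description of the semi-cube category. Since $\square$ is free strict monoidal on a single generating object $\I$, its objects are exactly the tensor powers $\I^{\otimes n}$ for $n \in \mathbb{N}$, which I will abbreviate as $[n]$; applying a functor $F : \square \r \C$ thus amounts to choosing a sequence of objects $X_n := F[n]$ in $\C$, and I will show the remaining data is precisely a system of face maps satisfying the cubical identities.

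For the morphisms, I would first observe that the generators $d^0, d^1 : \I \r 1$ give rise, via the strict monoidal structure, to the elementary face maps
\[
\delta_i^\epsilon \;=\; \id_\I^{\otimes(i-1)} \otimes d^\epsilon \otimes \id_\I^{\otimes(n-i)} \;:\; [n] \r [n-1]
\]
for each $1 \le i \le n$ and $\epsilon \in \{0,1\}$. The key structural claim is that every morphism of $\square$ factors as a composite of such elementary faces: because the only nonidentity generator $d^\epsilon$ deletes a single tensor factor, a morphism $[m] \r [n]$ is forced to delete $m-n$ of the factors (each carrying a label in $\{0,1\}$) while leaving the rest in order, and conversely every such deletion pattern is realised. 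This is the normal-form description of morphisms in a free strict monoidal category whose only generators are maps into the unit.

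Next I would identify the relations. Bifunctoriality of $\otimes$ (the interchange law) forces, for $i < j$, the identity $\delta_i^\epsilon \circ \delta_j^{\epsilon'} = \delta_{j-1}^{\epsilon'} \circ \delta_i^\epsilon$, since both composites delete the $i$-th factor with label $\epsilon$ and the $j$-th factor with label $\epsilon'$; freeness guarantees that these are the only relations, so $\square$ is presented by the objects $[n]$, the elementary faces $\delta_i^\epsilon$, and these cubical identities. Transporting this presentation across $F$, the data of a functor $\square \r \C$ becomes exactly a family of objects $X_n$ together with morphisms $\partial_i^\epsilon := F(\delta_i^\epsilon) : X_n \r X_{n-1}$ satisfying $\partial_i^\epsilon \partial_j^{\epsilon'} = \partial_{j-1}^{\epsilon'} \partial_i^\epsilon$ for $i < j$, which is the definition of a semi-cubical object in $\C$.

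The main obstacle is establishing the normal-form and completeness statement, namely that the cubical identities generate all coincidences among composites of elementary faces and that no further identifications occur. I would handle this by appealing to the standard description of free strict monoidal categories via planar string diagrams: a morphism is an isotopy class of such diagrams, interchange being the only nontrivial move, and for our one-legged generator $d^\epsilon$ this reduces to recording an order-preserving injection of the surviving factors together with a $\{0,1\}$-labelling of the deleted ones, which are precisely the morphisms of the (opposite of the) semi-cube category. Everything else is bookkeeping of indices.
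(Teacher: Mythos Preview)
Your argument is correct, but it is worth noting that the paper does not actually prove this proposition. The sentence immediately preceding it reads ``The reader unfamiliar with semi-cubical objects can take the following as a definition,'' and indeed no proof is supplied: the statement is treated as a well-known identification (or, for the uninitiated, as the definition of semi-cubical object). The only place the proposition is invoked later is to say that a strict \emph{monoidal} functor out of $\square$ amounts to an object with two maps to the unit, which is just the universal property built into Definition~\ref{definitionSquare} rather than the combinatorial content you develop.

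Your approach, by contrast, actually does the work: you compute the underlying category of $\square$ by analysing the free strict monoidal category on $d^0,d^1:\I\to 1$, exhibit the elementary faces $\delta_i^\epsilon$, derive the cubical identities from interchange, and argue via the string-diagram normal form that these relations are complete. This is the honest proof that the paper elides. The one place to be a little careful is the completeness step (that interchange generates all relations); your appeal to the standard planar-string-diagram description of free strict monoidal categories is the right move, and for generators landing in the unit it does reduce cleanly to order-preserving injections labelled by $\{0,1\}$, i.e.\ the morphisms of the semi-cube category. So nothing is wrong---you have simply supplied a proof where the paper chose to assert.
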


\begin{remark}
It is standard to define semi-cubical objects as presheaves on $\square^{op}$, but we do not want to include opposites in our axiomatisation of the situation. We will give remarks whenever this could lead to confusion.
\end{remark}

\begin{remark}
Objects in $\square$ are of the form $\I\otimes\cdots\otimes\I$ and morphisms are tensors of $d^0$, $d^1$ and $\id_\I$. 

For example morphisms going from $\I\otimes\I\otimes\I$ to $\I\otimes\I$ are:
\begin{eqnarray}
d^\epsilon\otimes \I\otimes\I\\
\I\otimes d^\epsilon\otimes \I\\
\I\otimes\I\otimes d^\epsilon
\end{eqnarray}
where $\epsilon =0,1$.
\end{remark}

There is a clear analogy between the natural transformations:
\begin{eqnarray}
d^0,d^1 &:& \__*\r \Id
\end{eqnarray}
and the generators for semi-cubes:
\begin{eqnarray}
d^0,d^1 &:& \I\r 1
\end{eqnarray}

To make this precise we need auxiliary definitions. The first one is very well-known:

\begin{definition}
For $\C$ a category, we write $\End_\C$ for the strict monoidal category of endofunctors of $\C$, with composition as tensor and identity as unit. 
\end{definition}

The second definition is not as common, but elementary.

\begin{definition}
Let $\M$ be a strict monoidal category. An $\M$-module consists of:
\begin{itemize}
\item A category $\C$.
\item A strict monoidal functor from $\M$ to $\End_\C$.
\end{itemize}
\end{definition}

This is just the usual definition of an action of monoid, valid in any cartesian closed category, specialised to the category of categories. 

\begin{remark}
We could equivalently define an $\M$-module as a category $\C$ with a functor: 
\begin{eqnarray}
\_\otimes\_ &:& \M\times \C \r \C
\end{eqnarray}
such that:
\begin{eqnarray}
(i\otimes j)\otimes X &=& i\otimes(j\otimes X)\\
1\otimes X &=& X
\end{eqnarray}
functorially in $i,j$ and $X$. 
\end{remark}

Now we can state:

\begin{lemma}
Being parametric is the same as being a $\square$-module.
\end{lemma}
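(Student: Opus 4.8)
The plan is to exhibit a bijection between parametric structures on a category $\C$ and $\square$-module structures on $\C$, and to check that this bijection respects the relevant equations. By the universal property in Definition \ref{definitionSquare}, $\square$ is the \emph{free} strict monoidal category on one object $\I$ together with two morphisms $d^0,d^1:\I\r 1$. Consequently, a strict monoidal functor $F:\square\r\End_\C$ is determined by, and freely determined by, a choice of its value on the generators: an object $F(\I)\in\End_\C$ (that is, an endofunctor $\C\r\C$) and two morphisms $F(d^0),F(d^1):F(\I)\r F(1)$ in $\End_\C$. Since $F$ is required to be strict monoidal, $F(1)$ must be the monoidal unit of $\End_\C$, namely $\Id_\C$. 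A morphism $F(\I)\r\Id_\C$ in $\End_\C$ is exactly a natural transformation, so the data of such an $F$ is precisely an endofunctor $\__*:\C\r\C$ together with two natural transformations $d^0,d^1:\__*\r\Id$. This is exactly the data of Definition \ref{definitionSquare} — sorry, of the parametricity definition. I would state this matching of data first.

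Next I would verify that there are no side conditions to check on either side, so that the matching of data is already the full correspondence. On the parametricity side there are no equations imposed beyond functoriality of $\__*$ and naturality of $d^0,d^1$, both of which are automatic once we have an endofunctor and two natural transformations. On the module side, the strict monoidal functor $F$ must respect composition and tensor, but because $\square$ is \emph{free} on its generators, any assignment of the generators extends uniquely to a strict monoidal functor, with the functoriality and monoidality equations holding automatically; there are no relations in $\square$ that would constrain the generators. Thus every endofunctor-with-two-natural-transformations yields a unique $\square$-module, and conversely, and the two assignments are mutually inverse. I would phrase this as: the universal property of $\square$ turns ``specifying $F$ on generators'' into ``specifying $F$ entirely,'' with no coherence obligations surviving.

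The correspondence I describe is the identity on the underlying category, and the iterated diagram drawn earlier in the section is precisely the image under $F$ of the morphisms of $\square$ between the objects $\I^{\otimes n}$, which reassures us that the module structure recovers the expected semi-cubical object. I expect the main (and only genuine) obstacle to be a careful statement of the freeness claim: one must be sure that Definition \ref{definitionSquare} really means free strict monoidal with no imposed relations among tensor powers of $d^0,d^1$, so that strict monoidal functors out of $\square$ are in natural bijection with the bare generating data in any strict monoidal target. Given Proposition \ref{squareMonoidal} and the definition of $\M$-module, this is a direct unwinding rather than a computation, so the proof is essentially an application of the universal property of the free strict monoidal category, and I would keep it short.
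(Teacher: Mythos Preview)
Your proposal is correct and follows essentially the same approach as the paper: both arguments use the universal property of $\square$ as the free strict monoidal category on $\I$ and $d^0,d^1:\I\r 1$ to identify strict monoidal functors $\square\r\End_\C$ with the data of an endofunctor $\__*$ and two natural transformations $\__*\r\Id$. Your version is more explicit about why no coherence conditions survive, but the core idea is the same unwinding of freeness that the paper invokes via Proposition~\ref{squareMonoidal}.
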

\begin{proof}
By Proposition \ref{squareMonoidal}, giving a monoidal functor from $\square$ to $\D$ is equivalent to giving an object $X:\D$ with two morphisms from $X$ to $1$. So a monoidal functor from $\square$ to $\End_\C$ corresponds precisely to $\__*$ an endofunctor of $\C$ with two natural transformations from $\__*$ to the identity endofunctor. 
\end{proof}

\begin{remark}
\label{semiCubicalObject}
For any $\square$-module $\C$ with an object $X$, we have a functor:
\begin{eqnarray}
F &:& \square \r \C\\
F(i) &=& i\otimes X
\end{eqnarray}
such that $F(1) = X$, so that $X$ is indeed the object of points for a semi-cubical object in $\C$.
\end{remark}

\section{An axiomatisation for this situation}

\label{Axiomatisation}

Now, using insight from the first section, we will define an abstract notion of parametricity. We will use the example of categories throughout to give intuitions on definitions and results presented here.

\begin{notation}
Everywhere in this section we will assume $\U$ a symmetric monoidal closed category, with:
\begin{itemize}
\item A product $\_\otimes\_$.
\item A unit $1$.
\item A linear arrow $\_\multimap\_$.
\end{itemize}
\end{notation}

An object in $\U$ is seen as a model of type theory.

\begin{example}
We can take the category of categories as $\U$. Its monoidal closed structure is in fact even cartesian.
\end{example}

\begin{example}
We can take the category of abelian groups as $\U$. The tensor is genuinely linear in the sense that it is not a cartesian product.
\end{example}

\subsection{Notions of parametricity and parametric objects}

\begin{definition}
A notion of parametricity for $\U$ is a monoid in $\U$. 

This means it consists of $\M:\U$ with:
\begin{eqnarray}
\mu &:& \M\otimes \M\r \M\\
\epsilon &:& 1\r \M
\end{eqnarray}
such that the following diagrams commute: 
\[\xymatrix{
(\M\otimes \M)\otimes \M \ar[dd]_{\cong} \ar[rr]^{\mu\otimes\M} && \M\otimes\M\ar[rd]^\mu\\
& & &\M\\
\M\otimes(\M\otimes\M) \ar[rr]_{\M\otimes\mu} && \M\otimes\M\ar[ru]_\mu\\
}\]
\[\xymatrix{
\M\ar[r]^\id\ar[d]_\cong & \M \\
\M\otimes 1 \ar[r]_{\M\otimes\epsilon} & \M\otimes\M\ar[u]_{\mu}\\
}\xymatrix{
\M\ar[r]^\id\ar[d]_\cong & \M \\
1\otimes \M \ar[r]_{\epsilon\otimes\M} & \M\otimes\M\ar[u]_{\mu}\\
}\]
\end{definition}

\begin{remark}
Here the diagrams are required to commute up to equality. So if $\U$ is the cartesian closed category of categories, a monoid in $\U$ is a \emph{strict} monoidal category.
\end{remark}


\begin{example}
\label{CatCartesianClosed}
For $\U$ the category of categories, a notion of parametricity is a (strict) monoidal category. We have many examples besides $\square$ from Definition~\ref{definitionSquare}. For example, all categories of cubes from \cite{buchholtz2017varieties} are monoidal. So we have notions of parametricity for categories corresponding to all kinds of cubes:
\begin{itemize}
\item We can have symmetries, diagonals or reflexivities. For reader familiar with cubical type theories, they correspond to structural rules on interval variables.
\item We can have connections and inverses.
\end{itemize}
All notions of parametricity in Example \ref{exampleParametricityPath} have reflexivities. Many more variants are possible, for example:
\begin{itemize}
\item We can consider the free monoidal category with an object $\I$ and $n$ maps to~$1$. This gives parametricity with $n$-ary relations rather than binary ones. For $n=1$ we get unary parametricity, which is a form of iterated realisability.
\item We can consider multiple generating objects in the monoidal category, for example bicubes are generated by two interval objects.
\end{itemize}
\end{example}

\begin{remark}
The notion of unary parametricity is similar to Kleene-style realisability, in the sense that any type comes with a predicate. But they have significant differences:
\begin{itemize}
\item In realisability a formula $A$ is send to a predicate $A_*$ on some programs, and a proof of $A$ is send to program $p$ realising $A$ (i.e. such that $A_*(p)$ holds). For example $A_*$ can be a predicate on untyped $\lambda$-terms, or G\"odel numbers for recursive functions. This means that we can have a realiser for $A$ without any proof of $A$. 
\item In unary parametricity a type $A$ is sent to a predicate $A_*$ on the type $A$ itself, and a term $a:A$ is send to $a_*:A_*(a)$. This is sometimes summarised by saying that type theory is its own language of realisers. Since $A_*$ is predicate over $A$, we cannot consider a realiser for a type without inhabitant.
\end{itemize}
Overall parametricity and realisability have distinct goals:
\begin{itemize}
\item Realisability validates new formulas using computational justifications, i.e. shows formulas consistent by finding programs realising them.
\item Unary parametricity emphasises the fact that terms are similar to programs, i.e. that they are continuous in some sense: they preserve the relevant predicates.
\end{itemize}
\end{remark}

\begin{remark}
The monoidal category corresponding to unary parametricity is freely generated by an object $\I$ and a map:
\begin{eqnarray}
d &:& \I\r 1
\end{eqnarray} 
It is the (opposite of the) category of augmented semi-simplices. This means that augmented semi-simplices are unary semi-cubes!

There is a similar result for the category of augmented simplices, which is freely generated by a comonoid $\I$.
\end{remark}

\begin{remark}
Recall that we used a non-standard orientation for the category of semi-cubes, so we should consider the opposites of the standard cube categories as notions of parametricity. This is not an issue because the opposite of a monoidal category is monoidal.
\end{remark}

Now we can easily adapt our definition of parametric categories.

\begin{remark}
For $\C:\U$ we have a monoid $\End_\C$ in $\U$ with $\C\multimap\C$ as underlying object, composition as product and identity as unit. 
\end{remark}

\begin{definition}
An object $\C:\U$ is called $\M$-parametric if it is an $\M$-module, i.e. if we are given a morphism of monoid from $\M$ to $\End_\C$. 
\end{definition}

A morphism from $\M$ to $\End_\C$ can be reformulated as a map $\alpha:\M\otimes \C\r \C$ such that the following diagrams commute:

\[\xymatrix{
(\M\otimes \M)\otimes \C \ar[dd]_{\cong} \ar[rr]^{\mu\otimes\C} && \M\otimes\C\ar[rd]^\alpha\\
& & &\M\\
\M\otimes(\M\otimes\C) \ar[rr]_{\M\otimes\alpha} && \M\otimes\C\ar[ru]_\alpha\\
}\]

\[\xymatrix{
\C\ar[r]^\id_\C\ar[d]_\cong & \C \\
1\otimes \C \ar[r]_{\epsilon\otimes\C} & \M\otimes\C\ar[u]_{\alpha}\\
}\]

\begin{remark}
In the literature an $\M$-module is sometimes called an $\M$-action, depending on the context. We will sometimes say that $\M$ acts on $\C$.
\end{remark}

Maps between $\M$-modules respecting the $\M$-action are called \emph{equivariant}. We give a precise definition:

\begin{definition}
An equivariant map between $\M$-modules $(\C,\alpha)$ and $(\D,\beta)$ is a map:
\begin{eqnarray}
f&:&\C\r \D
\end{eqnarray}
such that the following square commutes:
\[\xymatrix{
\M\otimes\C\ar[r]^{\M\otimes f}\ar[d]_\alpha & \M\otimes\D\ar[d]^\beta \\
\C\ar[r]_f & \D\\
}\]
\end{definition}

There is  a category of modules and equivariant maps.

\subsection{Freely and cofreely parametric objects}

Let $\M$ be a monoid in $\U$. Now we prove a theorem which will be used extensively in the rest of the paper. It can be summarised by saying that freely and cofreely $\M$-parametric objects always exist.


\begin{theorem}
The forgetful functor from $\M$-modules to $\U$ has both left and right adjoints. The left (resp. right) adjoint sends $\C$ to $\M\otimes \C$ (resp. $\M\multimap \C$) with the action induced by the canonical left (resp. right) action of $\M$ on itself.
\end{theorem}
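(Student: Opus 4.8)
The plan is to recognise $\M$-modules as the algebras for the monad $T = \M\otimes\_$ on $\U$, whose multiplication and unit are $\mu\otimes\_$ and $\epsilon\otimes\_$ (read through the associator and left unitor), and then to obtain the two adjoints by the two standard mechanisms: free algebras for the left adjoint and cofree coalgebras for the right one. Indeed, unwinding the definition of a $T$-algebra $a:\M\otimes\C\to\C$ reproduces exactly the two commuting diagrams defining an $\M$-action displayed after the definition of $\M$-parametric object, so the forgetful functor from $\M$-modules to $\U$ \emph{is} the forgetful functor from $T$-algebras, and it suffices to treat the latter.

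For the left adjoint I would exhibit the free $T$-algebra directly. Equip $\M\otimes\C$ with the action $\mu\otimes\C$ (using the associator to read $\M\otimes(\M\otimes\C)$ as $(\M\otimes\M)\otimes\C$); associativity and unitality of this action are immediate from those of $\mu$, so this is the stated action induced by the left regular action of $\M$ on itself. The universal property is the natural bijection
\[ \Hom_{\M\text{-mod}}\left((\M\otimes\C,\mu\otimes\C),(\D,\beta)\right)\cong\Hom_{\U}(\C,\D), \]
sending an equivariant $f$ to $f\circ(\epsilon\otimes\C)$ precomposed with $\C\cong 1\otimes\C$, and sending $g:\C\to\D$ to $\beta\circ(\M\otimes g)$. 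One checks that the latter is equivariant using associativity of $\beta$, and that the two assignments are mutually inverse using the unit law of $\beta$ together with naturality of the unitor; these are short diagram chases.

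For the right adjoint the key point is that, since $\U$ is monoidal closed, $T=\M\otimes\_$ has a right adjoint $G=\M\multimap\_$, and the monoid structure on $\M$ transports across this adjunction to a comonad structure $(\delta,\varepsilon)$ on $G$. The standard fact that the algebras of a monad coincide with the coalgebras of its adjoint comonad then identifies $\M$-modules with $G$-coalgebras over $\U$, and the forgetful functor from coalgebras for a comonad always has a right adjoint, the cofree one, sending $\C$ to $G\C=\M\multimap\C$ with the comultiplication as structure map. Concretely this structure map is the left action $\gamma:\M\otimes(\M\multimap\C)\to\M\multimap\C$ obtained by transposing the map that uses the symmetry $\sigma$ to bring the acting copy of $\M$ next to the argument copy, multiplies them with $\mu$, and evaluates — that is, precisely the action induced by the right regular action of $\M$ on itself. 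The corresponding hom-bijection sends an equivariant $h:\D\to\M\multimap\C$ to its composite with evaluation-at-the-unit $\M\multimap\C\to\C$, and sends $g:\D\to\C$ to the transpose of $g\circ\beta\circ\sigma$.

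The routine parts are the module axioms and the triangle identities on the free side, which are formal. The main obstacle is the cofree side: one must verify that $\gamma$ is an associative, unital action and that the proposed bijection is natural and mutually inverse, and every one of these computations is cluttered by the symmetry $\sigma$ and the evaluation maps of the closed structure. I would isolate this bookkeeping once and for all by proving the comonad–coalgebra identification abstractly, so that $\sigma$ only has to be handled while transporting $(\mu,\epsilon)$ to $(\delta,\varepsilon)$; after that, both the existence of the right adjoint and the description of its action follow formally, and the explicit formulas above serve only to confirm that the resulting action is the one induced by the right action of $\M$ on itself.
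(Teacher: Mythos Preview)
Your proposal is correct but takes a genuinely different route from the paper. The paper argues directly in the internal linear $\lambda$-calculus of a monoidal closed category: it writes down the action on $\M\multimap\C$ explicitly as $\alpha(i,u)=\lambda j.\,u(j\otimes i)$, defines the two halves $\psi,\phi$ of the hom-set bijection by hand, and checks that they are mutually inverse and natural. Your argument instead recasts everything as monad/comonad theory: $\M$-modules are the Eilenberg--Moore algebras for $T=\M\otimes\_$, the left adjoint is the free $T$-algebra, and for the right adjoint you transport the monad structure along $T\dashv G=\M\multimap\_$ to a comonad on $G$, invoke the identification of $T$-algebras with $G$-coalgebras, and take cofree $G$-coalgebras. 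The paper's approach is self-contained, makes the formulas explicit, and shows in passing that symmetry is not needed (it advertises the argument as valid ``in any monoidal closed category''); by contrast your description invokes the symmetry $\sigma$ unnecessarily, since the action on $\M\multimap\C$ can be built from the associator and evaluation alone. Your approach is more conceptual and situates the result within a known pattern (a monad with a right adjoint has the same algebras as the mate comonad's coalgebras), at the price of importing that fact as a black box; if you unfold it, you recover essentially the paper's computation.
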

\begin{proof}
We prove this result in linear simply-typed $\lambda$-calculus, so that it is true in any monoidal closed category. In order to check that we are indeed reasoning linearly, it is crucial that any bound variable in our $\Gl$-terms occurs precisely once. An alternative direct proof by diagram chasing in the symmetric monoidal closed category $\U$ is certainly possible.

We proceed with the proof for the right adjoint. Given $\C:\U$, we define $R(\C)$ as $\M\multimap \C$ with the $\M$-action
 $\alpha$ defined by: 
\begin{eqnarray}
\alpha &:& \M\multimap R(\C) \multimap R(\C)\\
\alpha(i,u) &=& \Gl j.\,u(j\otimes i)
\end{eqnarray}
Which is indeed an action. Now for $f:\C\r \D$ we define: 
\begin{eqnarray}
R(f) &:& R(\C)\multimap R(\D)\\
R(f,u) &=& f \circ u
\end{eqnarray}
We see that $R(f)$ is equivariant, and that $R$ is a functor from $\C$ to $\M$-modules. 

Now we want to check that it is right adjoint to the forgetful functor. Assume $(\C,\alpha)$ an $\M$-action, and $\D:\U$. We define:
\begin{eqnarray}
\psi &:& \Hom_\U(\C,\D) \r \Hom_\U(\C,\M\multimap \D)\\
\psi(f) &=& \Gl c,i.\, f(\alpha(i,c))
\end{eqnarray}
and check that $\psi(f)$ is equivariant. Next we define:
\begin{eqnarray}
\phi &:& \Hom_\U(\C,\M\multimap \D)\r \Hom_\U(\C,\D)\\
\phi(g) &=& \Gl c.\, g(c,1)
\end{eqnarray}
We check that for all $f:\Hom_\U(\C,\D)$ we have:
\begin{eqnarray}
\phi(\psi(f)) &=& \Gl c.\,(\Gl c',i.\, f(\alpha(i,c')))(c,1)\\
&=& \Gl c.\,f(\alpha(1,c))\\
&=& f
\end{eqnarray}
and that for all equivariant $g:\Hom_\U(\C,\M\multimap \D))$, i.e. all $g$ such that: 
\begin{eqnarray}
g(\alpha(i,c)) &=& \Gl j.\, g(c,j\otimes i)
\end{eqnarray} 
we have:
\begin{eqnarray}
\psi(\phi(g)) &=& \Gl c,i.\, (\Gl c'.\, g(c',1))(\alpha(i,c))\\
&=& \Gl c,i.\, g(\alpha(i,c),1)\\
&=& \Gl c,i.\, g(c,1\otimes i)\\
&=& g
\end{eqnarray}
Now we just need naturality to conclude, so we check that for: 
\begin{eqnarray}
\C,\C',\D,\D' &:& \U \\
\alpha &:& \M\ \mathrm{acting}\ \mathrm{on}\ \C\\
\alpha' &:& \M\ \mathrm{acting}\ \mathrm{on}\ \C'\\
f&:&\Hom_\U(\C',\C)\ \mathrm{with}\ f\ \mathrm{equivariant}\\
g&:&\Hom_\U(\C,\D)\\
h&:&\Hom_\U(\D,\D')
\end{eqnarray}
we have that:
\begin{eqnarray}
\psi(h\circ g\circ f) &=& \Gl c,i.\,(h\circ g)(f(\alpha'(i,c)))\\
&=& \Gl c,i.\,(h\circ g)(\alpha(i,f(c)))\\
&=& \Gl c.\, R(h)(\Gl i.\, g(\alpha(i,f(c))))\\
&=& R(h)\circ\psi(g)\circ f 
\end{eqnarray}

We omit the similar proof for the left adjoint.
\end{proof}

\begin{example}
For $\U$ the category of categories and $\square$ the category of semi-cubes:
\begin{itemize} 
\item The cofreely parametric category generated by $\C$ is the category of semi-cubical objects in $\C$, i.e. the category of functor from $\square$ to $\C$. 
\item The freely parametric category generated by $\C$ is the category $\square\times\C$. The existence of this left adjoint is easy to prove, but this formula is still pleasantly explicit. 
\end{itemize}
It works the same for all the previously mentioned variants of cubes, including bicubes and augmented simplices.
\end{example}


\begin{remark}
\label{analogyAbelian}
Recall that $\U$ was simply assumed symmetric monoidal closed. Taking the category of abelian groups as $\U$, we get the following correspondence: 
\[\begin{tabular}{|c|c|}
\hline
Model of type theory & Abelian groups \\
Notions of parametricity & Rings \\
Parametric models & Modules \\
Freely parametric models & Free modules \\
Cubical models & Cofree modules\\
\hline
\end{tabular}\]
Free (resp. cofree) modules are often called induced (resp. coinduced) modules.
\end{remark}



\section{Parametricity for left exact categories}

\label{LexCategories}

In this section we apply the machinery from Section \ref{Axiomatisation} to left exact categories (abbreviated as lex categories), i.e. categories with finite limits. 
So we will prove that the category of lex categories is symmetric monoidal closed, so that we can define notions of parametricity as monoids and parametric lex categories as modules. Then we will give examples of such monoids. It should be noted that we use a very strict version of lex categories:
\begin{itemize}
\item A lex category is given with a chosen limit for any finite diagram.
\item Functors between lex categories have to preserve the chosen limits.
\item Limits commute strictly, rather than up to natural isomorphisms.
\end{itemize}
These restrictions allow us to successfully use the $1$-category of lex categories. For example it is immediate to see that we have adjoint functors freely adding and forgetting finite limits. 

\begin{remark}
Already with categories in Example \ref{CatCartesianClosed}, we defined a notion of parametricity as a \emph{strict} monoidal category, bringing some unnecessary strictness. 

An alternative approach would be to use a symmetric monoidal closed $2$-category of models of type theory, with a notion of parametricity defined as a monoid where equations hold up to $2$-isomorphisms. 
\end{remark}

\begin{remark}
Most notions of parametricity considered here are finitely presented, so that assuming them strict is painless.
\end{remark}

\subsection{The category of lex categories is symmetric monoidal closed}

First we give our unusually strict definition of a lex category. We will require that some morphisms are identities, implicitly requiring that their sources and targets are equal. 

\begin{definition}
A lex category is a category $\C$ with an operation $\lim_I$ for any finite category $I$ sending any $F:\C^I$ to a limit cone.

Moreover we ask that for any finite categories $I$ and $J$ and diagram $H:\C^{I\times J}$, we have that the canonical isomorphisms:
\begin{eqnarray}
\lim_{i:I}(\lim_{j:J} H(i,j)) &\r& \lim_{j:J}(\lim_{i:I} H(i,j))
\end{eqnarray}
are identities.
\end{definition}

We define morphisms of lex categories as functor preserving the chosen limit cones. This gives a category $\Lex$. It should be noted that this defines $\Lex$ as the category of algebras for an essentially algebraic extension of the theory of categories. In particular this means that the functor forgetting limits has a left adjoint freely adding them. 

We define a symmetric monoidal closed structure on the category of lex categories. The arrow is immediate:

\begin{definition}
For $\C$ and $\D$ lex categories we define $\C\multimap\D$ as the category of lex functors from $\C$ to $\D$ with:
\begin{itemize}
\item Natural transformations as morphisms.
\item Limits computed pointwise.
\end{itemize} 
\end{definition}

\begin{remark}
This definition would not be valid without the strict commutations of limits. Indeed for $(f_i)_{i:I}$ a finite diagram of lex functors, its pointwise limit being lex means that for any diagram $(x_j)_{j:J}$ we have that the canonical morphism:
\begin{eqnarray}
\lim_{i:I}\lim_{j:J}f_i(x_j) &\r& \lim_{j:J}\lim_{i:I}f_i(x_j)
\end{eqnarray}
is an identity. 

This phenomena is already present for groups: the pointwise product of two group morphisms is not necessarily a group morphism, when the target group is not abelian.
\end{remark}

The tensor is defined so that it will be left adjoint to the arrow.

\begin{definition}
For $\C$ and $\D$ lex categories, we define $\C\otimes\D$ as the free lex category generated by a functor:
\begin{eqnarray}
\_\otimes\_ &:& \C\times\D\r \C\otimes\D
\end{eqnarray}
such that the canonical morphisms:
\begin{eqnarray}
(\lim_{i:I}c_i)\otimes d &\r& \lim_{i:I}(c_i\otimes d)\\
c \otimes (\lim_{j:J}d_j) &\r& \lim_{j:J}(c\otimes d_j)
\end{eqnarray}
are identities.
\end{definition}

This means that in order to define a a lex functor from $\C\otimes\D$ to $\E$, it is enough to define a functor $F:\C\times\D\r \E$ such that the morphisms:
\begin{eqnarray}
F(\lim_{i:I}c_i,d)&\r& \lim_{i:I}F(c_i,d)\\
F(c,\lim_{j:J}d_j)&\r& \lim_{j:J}F(c,d_j) 
\end{eqnarray}
are identities.

\begin{remark}
This definition implies that the canonical morphism:
\begin{eqnarray}
\lim_{i:I}\lim_{j:J} (c_i\otimes d_j) &\r& \lim_{j:J}\lim_{i:I} (c_i\otimes d_j)
\end{eqnarray}
is an identity. This equality would be unnatural, although not incoherent, without the strict commutativity of limits.
\end{remark}


\begin{definition}
We define $1$ as the free lex category generated by an object.
\end{definition}

So giving a lex functor from $1$ to $\C$ is the same as giving an object in $\C$.

\begin{remark}
The category with finite colimits freely generated by an object is the category of finite sets, so that $1$ is equivalent to the opposite of the category of finite sets.
\end{remark}

\begin{remark}
\label{analogy1}
This is very similar to the tensor and arrow for abelian groups, with the following correspondence extending Remark \ref{analogyAbelian}.
\[\begin{tabular}{|c|c|}
\hline
Sets & Categories\\
Addition, zero & Finite limits\\
Abelian groups & Lex categories\\
\hline
\end{tabular}\]
We will make this formal in Remark \ref{commutativeMonads}.
\end{remark}

We write $U$ for the forgetful functor sending a lex category to its underlying category. We write $F$ for its left adjoint freely adding limits. We prove an auxiliary lemma.

\begin{lemma}
\label{enrichedAdjunction}
For any category $\C$ and lex category $\D$ we have natural isomorphisms:
\begin{eqnarray}
U(F(\C)\multimap\D) &\cong& U(\D)^\C
\end{eqnarray}
Moreover limits in $F(\C)\multimap\D$ correspond to pointwise limits in $U(\D)^\C$.
\end{lemma}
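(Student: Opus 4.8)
The plan is to prove the isomorphism by unfolding the universal properties of the free lex category functor $F$ and the internal hom, exhibiting a natural bijection on objects and morphisms, and then verifying that it transports limits correctly. First I would establish the object-level correspondence: by definition $F(\C)\multimap\D$ has as objects the lex functors from $F(\C)$ to $\D$. Since $F$ is left adjoint to the forgetful functor $U$, lex functors $F(\C)\r\D$ correspond naturally to ordinary functors $\C\r U(\D)$, which are exactly the objects of the functor category $U(\D)^\C$. This is the free–forgetful adjunction applied at the level of objects, and it is the conceptual heart of the statement.

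Next I would handle the morphisms. A morphism in $F(\C)\multimap\D$ is a natural transformation between two lex functors $F(\C)\r\D$. Under the adjunction $F\dashv U$, I would check that such natural transformations correspond bijectively to natural transformations between the transposed functors $\C\r U(\D)$, i.e. to morphisms in $U(\D)^\C$. The key point here is that a natural transformation between lex functors out of $F(\C)$ is determined by its components on the image of the generators (the objects coming from $\C$), since every object of $F(\C)$ is built from these by chosen limits and any natural transformation automatically respects the limit cones; conversely any natural transformation on the generating data extends uniquely. I would make this precise by observing that the adjunction $F\dashv U$ is in fact an enriched (here $\Cat$-enriched, or at least hom-category-level) adjunction, so that it upgrades from a bijection of hom-sets to an isomorphism of hom-categories, which is precisely the claimed $U(F(\C)\multimap\D)\cong U(\D)^\C$.

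For the final clause about limits, I would use the definition of $\C'\multimap\D$ from the excerpt, in which limits are computed pointwise. Given a finite diagram in $F(\C)\multimap\D$, its limit is the lex functor whose value at each object of $F(\C)$ is the limit in $\D$ of the component values. Transporting along the isomorphism, I would check that the corresponding limit in $U(\D)^\C$ is again computed pointwise in $\D$: the value at $c:\C$ is the limit of the values at the image of $c$ in $F(\C)$, and these agree because the transposition is given by precomposition with the unit $\C\r UF(\C)$, which does not disturb pointwise evaluation. Naturality of the whole isomorphism in both $\C$ and $\D$ I would verify by the usual naturality of adjunction transposes.

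The main obstacle I expect is not the object-level bijection, which is immediate from $F\dashv U$, but rather the careful verification that the morphism correspondence is genuinely a bijection and that it is compatible with the lex (limit-preserving) structure. Specifically, one must confirm that a natural transformation defined only on the generators $\C$ really does extend uniquely and coherently to all of $F(\C)$, respecting the strictly commuting chosen limits; this is where the unusually strict setup of the paper is essential, since without strict commutation of limits the pointwise limit of lex functors would not itself be lex, and the internal hom would be ill-defined. Thus the delicate step is to leverage the strictness conventions to guarantee that the extension is on the nose rather than merely up to canonical isomorphism.
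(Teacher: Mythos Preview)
Your proposal is correct and follows the same overall structure as the paper's proof: object-level bijection from $F\dashv U$, then morphism-level bijection, then the observation that pointwise limits transport along precomposition with the unit.

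The one notable difference is how the morphism bijection is established. You argue directly that a natural transformation between lex functors out of $F(\C)$ is determined by, and extends uniquely from, its components on the generating objects of $\C$, essentially appealing to a $\Cat$-enriched form of the adjunction. The paper instead reduces the morphism case back to the object case via the arrow category: a natural transformation $X\Rightarrow Y$ is the same as a functor $Z:F(\C)\to\D^\r$ with $S\circ Z=X$ and $T\circ Z=Y$; one then observes that because $X$ and $Y$ are lex, any such $Z$ is automatically lex (this is the step you flagged as delicate), so one can apply the ordinary set-level adjunction $\Hom_\Lex(F(\C),\D^\r)\cong\Hom_\Cat(\C,U(\D)^\r)$ and read off the natural transformations on the other side. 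Your approach makes the conceptual content (enriched adjunction) explicit; the paper's arrow-category trick is slicker in that it avoids a separate inductive extension argument and recycles the object-level isomorphism verbatim.
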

\begin{proof}
The isomorphism is immediate on objects. For any:
\begin{eqnarray}
X&:&\Hom_\Lex(F(\C),\D)
\end{eqnarray}
we write $\overline{X}$ the corresponding functor from $\C$ to $U(\D)$.

For $\D$ a category, we denote by $\D^\r$ its arrow category with $S$ (resp. $T$) the functor giving the source (resp. the target) of an arrow. Assume $X$ and $Y$ in $\Hom_\Lex(\C,\D)$, then a natural transformation from $X$ to $Y$ is:
 \begin{eqnarray}
\{ Z : \Hom_\Cat(\C,\D^\r)\ |\ S\circ Z = X,\ T\circ Z = G\}
\end{eqnarray}
But using the fact that $X$ and $Y$ are lex and naturality, we can show that any such $Z$ is in fact lex. Then for any $X$ and $Y$ in $F(\C)\multimap\D$ we have:
  \begin{eqnarray}
&& \Hom_{F(\C)\multimap\D}(X,Y)\\
&\cong&\{ Z : \Hom_\Cat(F(\C),\D^\r)\ |\ S\circ Z = X,\ T\circ Z = Y\} \\
&\cong & \{ Z : \Hom_\Lex(F(\C),\D^\r)\ |\ S\circ Z = X,\ T\circ Z = Y\} \\
&\cong & \{ \overline{Z} : \Hom_\Lex(\C,U(\D)^\r)\ |\ S\circ \overline{Z} = \overline{X},\ T\circ \overline{Z} = \overline{Y}\} \\
&\cong& \Hom_{U(\D)^\C}(\overline{X},\overline{Y})
\end{eqnarray}
This concludes the proof that:
\begin{eqnarray}
U(F(\C)\multimap\D) &\cong& U(\D)^\C
\end{eqnarray}

This isomorphism restricts functor and natural transformations defined on $F(\C)$ to $\C$, and limits in $F(\C)\multimap\D$ are computed pointwise, so they correspond to pointwise limits in $U(\D)^\C$.
\end{proof}

\begin{remark}
In principle a natural transformation between lex functors should be assumed lex. This condition can be omitted (and always is) because it is always true, as used in the above proof.
\end{remark}

Now we bring all these constructions together.

\begin{theorem}
\label{monoidalLexCat}
This defines a symmetric monoidal closed structure on the category of lex categories.
\end{theorem}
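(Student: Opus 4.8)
The plan is to deduce the entire symmetric monoidal closed structure from a single universal property of the tensor, exactly as one does for the tensor product of modules over a commutative ring. Write $\mathrm{BiLex}(\C,\D;\E)$ for the set of functors $F:\C\times\D\r\E$ that preserve the chosen limits separately in each variable, meaning that the canonical comparison maps $F(\lim_{i:I}c_i,d)\r\lim_{i:I}F(c_i,d)$ and $F(c,\lim_{j:J}d_j)\r\lim_{j:J}F(c,d_j)$ are identities. By the defining universal property of $\C\otimes\D$ (precomposition with $\_\otimes\_:\C\times\D\r\C\otimes\D$), we have a natural bijection
\[
\Hom_\Lex(\C\otimes\D,\E)\;\cong\;\mathrm{BiLex}(\C,\D;\E).
\]
This is the only genuine input; everything else is a matter of comparing representable functors.

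First I would establish the closed structure by currying. A functor $\C\times\D\r\E$ yields a functor $\C\r U(\E)^\D$, and I claim the bi-lex ones correspond exactly to the lex functors $\C\r(\D\multimap\E)$. Indeed, by the definition of $\multimap$ an object of $\D\multimap\E$ is a lex functor $\D\r\E$, its limits are computed pointwise, and (as recorded in the remark following Lemma~\ref{enrichedAdjunction}) natural transformations of lex functors are automatically lex. A bi-lex $F$ is, in its second variable, a family $F(c,\_)$ of lex functors, so currying lands in $\D\multimap\E$; and because limits in $\D\multimap\E$ are pointwise, preservation of limits in the first variable is precisely the condition that the curried functor be lex. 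This gives natural bijections
\[
\Hom_\Lex(\C\otimes\D,\E)\;\cong\;\mathrm{BiLex}(\C,\D;\E)\;\cong\;\Hom_\Lex(\C,\D\multimap\E),
\]
which is the required tensor--hom adjunction; naturality in all three variables is routine once the bijections are written out.

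Next, associativity, the unit laws and symmetry all follow by recognizing iterated tensors as classifiers of multi-lex functors. Both $(\C\otimes\D)\otimes\E$ and $\C\otimes(\D\otimes\E)$ represent the functor sending $\F$ to the set of functors $\C\times\D\times\E\r\F$ preserving chosen limits separately in each of the three variables, so the canonical comparison between them is a natural isomorphism. The strict commutation of limits is exactly what is needed here, since it makes the nested comparison maps $\lim_{i:I}\lim_{j:J}$ versus $\lim_{j:J}\lim_{i:I}$ into identities (the point flagged in the remark after the definition of $\otimes$). For the unit, a bi-lex functor $1\times\C\r\E$ is determined by its restriction along the generating object of $1$, which is an arbitrary lex functor $\C\r\E$, since $1$ is freely generated by one object under limits; hence $1\otimes\C\cong\C$, and symmetrically on the right. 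Symmetry comes from the evident bijection $\mathrm{BiLex}(\C,\D;\E)\cong\mathrm{BiLex}(\D,\C;\E)$ swapping the two arguments.

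Finally, the coherence axioms (pentagon, triangle, hexagon) reduce to identities on multilinear maps. Each structure isomorphism is, by the representability above, the unique comparison induced on representing objects by the corresponding reassociation or permutation of the arguments of a multi-lex functor; composing these around a coherence diagram permutes arguments in two ways that agree as set-level operations, so the diagrams commute. The main obstacle is not any single axiom but the careful setup of the multi-lex universal property together with the verification that strictness turns the iterated-limit comparisons into genuine identities rather than mere isomorphisms; once $\mathrm{BiLex}$ and its higher analogues are in place, the monoidal structure is forced and the coherence checks become bookkeeping.
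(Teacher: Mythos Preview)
Your argument is correct and, for the tensor--hom adjunction, coincides with the paper's: both unfold the universal property of $\C\otimes\D$ into bi-lex functors $\C\times\D\r\E$ and curry, observing that lexness in the second variable lands you in $\D\multimap\E$ and lexness in the first variable (with pointwise limits) makes the curried functor lex.

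The organisation diverges after that. You treat associativity, symmetry, the unit, and coherence uniformly by introducing $\mathrm{MultiLex}$ and appealing to representability; the paper instead builds the symmetry and associativity isomorphisms by hand, defining $S(c,d)=d\otimes c$ and $A((c,d),e)=c\otimes(d\otimes e)$ on generators, checking they commute with limits in each variable, and extending via the universal property of $\otimes$. For the unit the paper does not argue directly from the universal property of $1$ as you do, but invokes the auxiliary Lemma~\ref{enrichedAdjunction} to get $1\multimap\C\cong\C$ from $1=F(\top)$. Your packaging pays off for coherence: the pentagon and hexagon become identities of permutations on multi-lex maps, whereas the paper simply declares the coherence checks omitted. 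Conversely, the paper's explicit $S$ and $A$ make clearer what the structure maps actually are on objects. Both routes lean on strict commutation of limits in the same places.
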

\begin{proof}
There are many things to check:
\begin{itemize}
\item First we check that we have a natural isomorphism:
\begin{eqnarray}
\Hom_\Lex(\C\otimes\D,\E) \cong \Hom_\Lex(\C,\D\multimap\E)
\end{eqnarray}
Indeed giving a morphism in: 
\begin{eqnarray}
\Hom_\Lex(\C\otimes\D,\E)
\end{eqnarray} 
is naturally equivalent to giving:
\begin{eqnarray}
F &:& \Hom_\Cat(\C\times\D,\E)
\end{eqnarray}
such that the canonical morphisms:
\begin{eqnarray}
F(\lim_{i:I}c_i,d) &\r& \lim_{i:I}F(c_i,d)\\
F(c,\lim_{j:J}d_j)&\r&\lim_{j:J}F(c,d_j)
\end{eqnarray}
are identities, which is in turn naturally equivalent to giving:
\begin{eqnarray}
\overline{F} &:& \Hom_\Cat(\C,\E^\D)
\end{eqnarray}
such that the canonical morphisms:
\begin{eqnarray}
\overline{F}(\lim_{i:I}c_i)(d) &\r& \lim_{i:I}\overline{F}(c_i)(d)\\
\overline{F}(c)(\lim_{j:J}d_j)&\r&\lim_{j:J}\overline{F}(c)(d_j)
\end{eqnarray}
are identities. The first conditions means that $\overline{F}$ is a lex functor from $\C$ to $\E^\D$, with limits in $\E^\D$ computed pointwise. The second condition means that the image of $\overline{F}$ is included in $\D\multimap\E$, i.e. the full subcategory of $\E^\D$ consisting of lex functors. So together they precisely mean that:
\begin{eqnarray}
\overline{F} &:& \Hom_\Lex(\C,\E\multimap\D)
\end{eqnarray}
\item Next we check that $\_\otimes\_$ is symmetric. Indeed we can check that the the functor: 
\begin{eqnarray}
S &:& \C\times\D\r \D\otimes\C\\
S(c,d) &=& d\otimes c
\end{eqnarray}
commutes with limits in $c$ and $d$, so that indeed it can be extended to:
\begin{eqnarray}
S &:& \Hom_\Lex(\C\otimes\D,\D\otimes\C)
\end{eqnarray}
We can check that $S$ is self-inverse. 
\item Similarly we can define a functor:
\begin{eqnarray}
A &:& (\C\times\D)\times\E \r \C\otimes(\D\otimes\E)\\
A((c,d),e) &=& c\otimes(d\otimes e)
\end{eqnarray}
which commutes with limits in each variable, so that it can be extended to:
\begin{eqnarray}
A &:& \Hom_\Lex((\C\otimes\D)\otimes \E,\C\otimes(\D\otimes\E))\\
\end{eqnarray}
It is straightforward to define an inverse to $A$.
\item Now we need to check that $1$ is indeed a unit. But this is a consequence of the natural isomorphism:
\begin{eqnarray}
(1\multimap\C) &\cong& \C
\end{eqnarray}
in $\Lex$, by Lemma \ref{enrichedAdjunction} applied to $1=F(\top)$. 
\end{itemize}
We omit the necessary checking of the various coherence diagrams.
\end{proof}

\begin{remark}
\label{commutativeMonads}
There exists a notion of commutative monad (see for example Section 6 in \cite{brandenburg2014tensor}). For $T$ a commutative monad on a symmetric monoidal closed category $\C$, the category of $T$-algebras is symmetric monoidal closed (assuming equalisers in $\C$ to build arrows and coequalisers in $T$-algebras to build tensors). We give two examples.
\begin{itemize}
\item The monad of abelian groups on sets is commutative.
\item The monad of lex categories on categories is commutative, with our assumption: 
\begin{eqnarray}
\lim_I\lim_J &=& \lim_J\lim_I
\end{eqnarray} 
Otherwise it would be commutative in a $2$-categorical sense.
\end{itemize}
The monoidal closed structures on the categories of abelian groups and lex categories can both be built this way, cementing the analogy between abelian groups and lex categories from Remark \ref{analogy1}.
\end{remark}

\subsection{Notions of parametricity for lex categories}

Now we can use the symmetric monoidal closed structure from the previous section to define notions of parametricity as well as parametric models. We emphasise this:

\begin{definition}
A notion of parametricity for lex categories is a monoid in $\Lex$.
\end{definition}

Now we get the following proposition by unfolding this definition:

\begin{proposition}
Giving a notion of parametricity for lex categories is equivalent to giving a lex category $\M$ with a (strictly) monoidal product $\_\otimes\_$ such that the canonical morphisms:
\begin{eqnarray}
(\lim_{i:I}m_i)\otimes n &\r& \lim_{i:I}(m_i\otimes n)\\
m \otimes (\lim_{j:J}n_j) &\r& \lim_{j:J}(m\otimes n_j)
\end{eqnarray}
are identities.
\end{proposition}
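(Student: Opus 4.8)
The plan is to unfold the definition of a monoid in $\Lex$ using the universal properties of the tensor $\otimes$ and the unit $1$ established in the construction of the symmetric monoidal closed structure (Theorem \ref{monoidalLexCat}). A monoid consists of an object $\M:\Lex$ together with a multiplication $\mu:\M\otimes\M\r\M$ and a unit $\epsilon:1\r\M$ satisfying the associativity and unit diagrams. I will translate each piece of this data and each axiom into a statement about a monoidal structure on the underlying lex category.

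First I would translate the multiplication. By the universal property of the tensor product of lex categories, a lex functor $\mu:\M\otimes\M\r\M$ is the same as a functor $\_\otimes\_:\M\times\M\r\M$ such that the canonical comparison morphisms
\[ (\lim_{i:I}m_i)\otimes n \r \lim_{i:I}(m_i\otimes n), \qquad m\otimes(\lim_{j:J}n_j)\r\lim_{j:J}(m\otimes n_j) \]
are identities. This is precisely the compatibility with finite limits appearing in the statement, so this step produces exactly the required condition relating $\_\otimes\_$ to limits. Next, by the universal property of $1$ as the free lex category on an object, the unit $\epsilon:1\r\M$ amounts to a chosen object of $\M$, which will serve as the monoidal unit.

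Then I would check that the monoid axioms become the strict monoidal axioms. The two unit diagrams, transported across the adjunction $\Hom_\Lex(\M\otimes\M,\M)\cong\Hom_\Lex(\M,\M\multimap\M)$, say exactly that tensoring with the chosen object on either side is the identity functor, i.e. $1\otimes m = m = m\otimes 1$ functorially. For associativity I would precompose along the associator $A$ constructed in the proof of Theorem \ref{monoidalLexCat}, recalling that $A$ is the lex extension of $((c,d),e)\mapsto c\otimes(d\otimes e)$; unfolding the equation $\mu\circ(\mu\otimes\M) = \mu\circ(\M\otimes\mu)\circ A$ on the generating object $((m,n),p)$ then yields the strict equation $(m\otimes n)\otimes p = m\otimes(n\otimes p)$. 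Since all the monoid diagrams are required to commute on the nose, the resulting structure is a \emph{strict} monoidal category, and the correspondence is manifestly invertible.

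The main obstacle I anticipate is the bookkeeping around the associator and unitors of the monoidal structure on $\Lex$: these are nontrivial lex functors defined by universal properties, so one must verify that composing them with $\mu$ reduces, on generators, to the naive equalities $(m\otimes n)\otimes p = m\otimes(n\otimes p)$ and $1\otimes m = m$. Because a lex functor out of a tensor is determined by its values on generating objects together with limit-preservation, this verification reduces to evaluating both sides on elements $m,n,p:\M$, which is routine once the coherence data of Theorem \ref{monoidalLexCat} is written out explicitly.
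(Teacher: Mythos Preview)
Your proposal is correct and matches the paper's approach: the paper itself does not give a proof of this proposition beyond the one-line remark that it is obtained ``by unfolding this definition'', and what you have written is precisely that unfolding carried out in detail using the universal properties of $\_\otimes\_$ and $1$ from Theorem~\ref{monoidalLexCat}. Your discussion of the associator and unitors is more careful than anything the paper spells out, but it is the right verification and there is no gap.
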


Such a category will be called a monoidal lex category. Now we give examples. First we prove that notions of parametricity for categories can be extended to lex categories.

\begin{proposition}
\label{freeLexMonoidal}
The functor $F$ freely adding finite limits to a category is strongly monoidal.
\end{proposition}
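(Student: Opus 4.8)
The plan is to exhibit coherent natural isomorphisms
\begin{eqnarray}
F(\C)\otimes F(\D) &\cong& F(\C\times\D)\\
F(\top) &\cong& 1
\end{eqnarray}
that are compatible with the associators, unitors and symmetries on $\Cat$ (cartesian) and on $\Lex$ (the structure from Theorem \ref{monoidalLexCat}). The unit comparison is immediate: by definition $1$ is the free lex category on an object, i.e.\ $1=F(\top)$, so $F$ preserves the monoidal unit on the nose. The real content is the binary comparison, which I would obtain not by hand but from universal properties, so that the verification of the comparison morphism being an isomorphism reduces to a computation of representable functors.

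Concretely, I would fix an arbitrary lex category $\E$ and compute $\Hom_\Lex(F(\C)\otimes F(\D),\E)$ by chaining the adjunctions already established: first the tensor--hom adjunction of Theorem \ref{monoidalLexCat}, then the free--forgetful adjunction $F\dashv U$, then Lemma \ref{enrichedAdjunction} (applied with $\D$ and $\E$ in place of $\C$ and $\D$), and finally the cartesian closed structure of $\Cat$. This gives
\begin{eqnarray}
\Hom_\Lex(F(\C)\otimes F(\D),\E) &\cong& \Hom_\Lex(F(\C),F(\D)\multimap\E)\\
&\cong& \Hom_\Cat(\C,\,U(F(\D)\multimap\E))\\
&\cong& \Hom_\Cat(\C,\,U(\E)^\D)\\
&\cong& \Hom_\Cat(\C\times\D,\,U(\E))\\
&\cong& \Hom_\Lex(F(\C\times\D),\E)
\end{eqnarray}
all natural in $\E$ (and in $\C$ and $\D$). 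By the Yoneda lemma these representables being isomorphic forces an isomorphism $F(\C)\otimes F(\D)\cong F(\C\times\D)$ in $\Lex$, natural in both arguments.

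The main obstacle I expect is not producing these isomorphisms but checking their coherence, which is where a naive direct approach becomes tedious. I would handle this by the doctrinal adjunction point of view: the forgetful functor $U:\Lex\r\Cat$ is lax monoidal, with structure maps $U(\C)\times U(\D)\r U(\C\otimes\D)$ given by $(c,d)\mapsto c\otimes d$ and $\top\r U(1)$ picking out the generating object, so its left adjoint $F$ inherits a canonical oplax monoidal structure whose coherence diagrams hold automatically. It then remains only to verify that the oplax comparison maps coincide with the isomorphisms produced above, hence are invertible, which makes $F$ strong monoidal. The delicate bookkeeping is tracing the unit and counit of $F\dashv U$ through the chain of isomorphisms to identify the Yoneda-induced isomorphism with the canonical mate of the lax structure of $U$; once that identification is made, coherence is inherited from the doctrinal framework rather than checked by hand.
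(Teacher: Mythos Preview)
Your proposal is correct and follows essentially the same approach as the paper: the identical chain of natural isomorphisms (just traversed in the opposite direction), with Lemma \ref{enrichedAdjunction} as the key step, and Yoneda to conclude. The only difference is that you add a doctrinal-adjunction argument for coherence, whereas the paper simply writes ``We omit the necessary checking of the various coherence diagrams.''
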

\begin{proof}
We write $U$ for the functor forgetting finite limits. 
\begin{itemize}
\item We have a string of natural isomorphisms where $\C$ and $\D$ are categories and $\E$ is a lex category:
\begin{eqnarray}
\Hom_\Lex(F(\C\times \D),\E) &\cong& \Hom_\Cat(\C\times\D,U(\E))\\
&\cong& \Hom_\Cat(\C,U(\E)^\D)\\
&\cong& \Hom_\Cat(\C,U(F(\D)\multimap\E))\label{keyStep}\\
&\cong& \Hom_\Lex(F(\C),F(\D)\multimap\E)\\
&\cong&\Hom_\Lex(F(\C)\otimes F(\D),\E)
\end{eqnarray} 
where Equation \ref{keyStep} uses Lemma \ref{enrichedAdjunction}. We can conclude by Yoneda lemma.
\item The lex category $1$ is actually defined as $F(\top)$ with $\top$ the terminal category. 
\end{itemize}
We omit the necessary checking of the various coherence diagrams.
\end{proof}


\begin{corollary}
For any notion of parametricity $\M$ for categories, we have that $F(\M)$ is a notion of parametricity for lex categories.
\end{corollary}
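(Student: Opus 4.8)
The plan is to recognise this corollary as an instance of the general principle that a strong monoidal functor transports monoids to monoids, and to apply it to $F$, which was just shown to be strongly monoidal in Proposition \ref{freeLexMonoidal}. Conceptually there is nothing more to do than feed the monoid $\M$ through $F$; the only content is checking that the coherence isomorphisms of $F$ interact correctly with the monoid axioms.

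First I would spell out what is being asserted. By definition a notion of parametricity $\M$ for categories is a monoid in $\Cat$, i.e. a category $\M$ with maps $\mu:\M\times\M\r\M$ and $\epsilon:\top\r\M$ satisfying the associativity and unit diagrams, while a notion of parametricity for lex categories is a monoid in $\Lex$. So the goal is precisely to equip $F(\M)$ with a monoid structure for the symmetric monoidal closed structure on $\Lex$ from Theorem \ref{monoidalLexCat}.

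Next I would transport the structure along the coherence data of $F$. Writing $\phi_{\C,\D}:F(\C)\otimes F(\D)\overset{\cong}\r F(\C\times\D)$ for the natural isomorphism obtained in the proof of Proposition \ref{freeLexMonoidal} (via Yoneda from the displayed string of isomorphisms), and recalling that $1=F(\top)$, I define the multiplication of $F(\M)$ as the composite
\[
F(\M)\otimes F(\M)\overset{\phi_{\M,\M}}\r F(\M\times\M)\overset{F(\mu)}\r F(\M)
\]
and its unit as $1=F(\top)\overset{F(\epsilon)}\r F(\M)$. Then each of the two defining diagrams for a monoid $F(\M)$ in $\Lex$ unfolds into a region that can be filled by pasting together three kinds of cells: the image under $F$ of the corresponding monoid diagram for $\M$, naturality squares for $\phi$, and the monoidal coherence of $F$ relating its associator and unitors to those of $\Lex$. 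This is exactly the standard verification that a strong monoidal functor preserves monoids.

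The main obstacle is bookkeeping rather than anything conceptual: because the paper adopts a strict stance (the monoid diagrams are required to commute up to equality, with the associator displayed explicitly), one must thread the genuine isomorphisms $\phi$ and the associators of $\Lex$ through the axioms carefully, since they are not identities. Once Proposition \ref{freeLexMonoidal} is granted this is routine, and I would omit the explicit coherence diagram chase, just as the surrounding proofs do.
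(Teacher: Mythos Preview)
Your proposal is correct and follows exactly the same approach as the paper: the paper's proof is the single sentence ``Strongly monoidal functors preserve monoids,'' and what you have written is precisely the unpacking of that fact applied to $F$ from Proposition~\ref{freeLexMonoidal}. Your additional detail about defining the multiplication via $\phi$ and checking coherence is a faithful expansion of that one-line argument.
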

\begin{proof}
Strongly monoidal functors preserve monoids.
\end{proof}


Now that we are considering lex categories, we can deal with truncated notions of parametricity, i.e. non-iterated ones. We fix $n$ a natural number for this section. First we define the $n$-truncated semi-cubes:

\begin{definition}
Let $\square$ be the semi-cube category. Any object in $\square$ if of the from $\I^k = \I\otimes\cdots \otimes\I$ the $k$-th monoidal product of $\I$. We write $\square_n$ the full subcategory of $\square$ with objects $1,\I,\cdots,\I^n$.
\end{definition}

Our goal is to show that $\square_n$ induces a notion of parametricity for lex categories. To do this we need to prove that the free lex category generated by $\square_n$ is a monoidal lex category.

Recall that the Day convolution extends $\_\otimes\_$ on $\square$ to a monoidal tensor on $\Set^\square$. This tensor is closed on both sides, meaning that we have two bifunctors $\_\multimap\_$ and $\_\leftmultimap\_$ with natural isomorphisms:
\begin{eqnarray}
\Hom(X\otimes Y,Z) &\cong& \Hom(X,Y\multimap Z)\\
\Hom(X\otimes Y,Z) &\cong& \Hom(Y,Z \leftmultimap X)
\end{eqnarray}
for in $X,Y,Z:\Set^\square$.

The inclusion of full subcategory $f:\square_n\r \square$ induces a post-composition functor:
\begin{eqnarray}
f^*  &:& \Set^\square \r \Set^{\square_n}
\end{eqnarray}
with a full and faithful left (resp. right) adjoint $f_!$ (resp. $f_*$). An object $X:\Set^{\square}$ is called coskeletal if: 
\begin{eqnarray}
X &\cong &f_*(f^*(X))
\end{eqnarray}
and $f_*(f^*(X))$ is called the coskeleton of $X$.

\begin{remark}
Coskeletal cubical objects are usually called $n$-coskeletal, to emphasise the dependency on $n$. We omit $n$ here because it is fixed for the whole section.
\end{remark}

We want to give a helpful criteria for coskeletal object. First we define $n$-cells:

\begin{definition}
\label{borderInductiveDefinition}
We define:
\begin{eqnarray}
\delta\I^k &:& \Set^\square\\
\delta i^k &:& \delta\I^k\r \I^k
\end{eqnarray}
by:
\begin{eqnarray}
\delta i^0 &:& \bot \r 1
\end{eqnarray}
and we define $\delta i^{k+1}$ from $\delta i^k$ using the pushout square:
\[\xymatrix{
 & \delta \I^k \coprod \delta \I^k\ar[dl]_{\alpha_{\delta\I^k}}\ar[dr]^{\delta i^k \coprod \delta i^k}& \\
 \delta \I^k\otimes \I\ar[rd]\ar@/^-2pc/[rdd]_{\delta i^k\otimes \I} &  &\I^n\coprod\I^n\ar[ld]\ar@/^2pc/[ldd]^{\alpha_{\I^k}}\\
  & \delta\I^{k+1}\ar[d]^{\delta i^{k+1}} & \\
  & \I^{k+1}& \\
}\]
where:
\begin{eqnarray}
\alpha_X = (X\otimes d^0\, |\, X\otimes d^1) :  &:& X\coprod X \r X\otimes\I
\end{eqnarray}
\end{definition}

One can check that the map:
\begin{eqnarray}
\delta i^k &:& \delta\I^k\r\I^k
\end{eqnarray}
is the inclusion of the border of a $k$-cell in the usual sense.

\begin{definition}
An morphism $u:A\r B$ is left orthogonal to an object $X$ if the induced map:
\begin{eqnarray}
u^* &:& \Hom(B,X) \r \Hom(A,X)
\end{eqnarray}
is a bijection. We denote this by $u\bot X$.
\end{definition}

This means that for any $f:A\r X$ there exists a unique dotted arrow making the triangle commutes:
\[\xymatrix{
A\ar[d]_u\ar[r]^f & X\\
B\ar@{-->}[ru] & \\
}\]

We do not give a proof for the next lemma. It holds both for semi-cubes and cubes with refexivities only (as claimed without proof in \cite{kennett2011levels}), although we do not know to what extent it holds for other cubes.

\begin{lemma}
An element $X:\Set^\square$ is coskeletal if and only $\delta i^k$ for $k>n$ is left orthogonal to $X$.
\end{lemma}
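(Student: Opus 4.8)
The plan is to prove both implications by playing the two adjunctions $f_!\dashv f^*\dashv f_*$ off against the elementary fact that, for a fixed object $X$, the class of maps left orthogonal to $X$ is closed under pushouts, coproducts and (transfinite) composition. This closure holds because $\Hom(\_,X)$ sends these colimits to limits and bijections are stable under the relevant limits, so I can use it freely once the two generating facts below are in place.

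First I would record two observations about the cube category. Observation one: for $k>n$ the boundary $\delta\I^k$ and the representable $\I^k$ agree in every dimension $<k$, hence in every dimension $\leq n$, differing only by the single top cell; consequently $f^*(\delta i^k)\colon f^*\delta\I^k\r f^*\I^k$ is the identity. Observation two: the canonical map $\mathrm{sk}_n\I^k\r\I^k$ out of the $n$-skeleton (with $\mathrm{sk}_n=f_!f^*$) is built from the boundary inclusions by the standard cellular filtration, passing from the $(m-1)$-skeleton to the $m$-skeleton by attaching the $m$-faces of the cube along pushouts of $\delta i^m$, so that for $k>n$ every cell attached has dimension $m$ with $n<m\leq k$.

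For the forward direction I would assume $X$ coskeletal, i.e.\ that the unit $\eta_X\colon X\r f_*f^*X$ is an isomorphism, and fix $k>n$. The adjunction $f^*\dashv f_*$ supplies, naturally in the representable argument, bijections $\Hom(\I^k,f_*f^*X)\cong\Hom(f^*\I^k,f^*X)$ and $\Hom(\delta\I^k,f_*f^*X)\cong\Hom(f^*\delta\I^k,f^*X)$ under which precomposition by $\delta i^k$ becomes precomposition by $f^*(\delta i^k)=\id$, a bijection by Observation one. Transporting along the isomorphism $\eta_X$ then shows $(\delta i^k)^*\colon\Hom(\I^k,X)\r\Hom(\delta\I^k,X)$ is a bijection, that is $\delta i^k\bot X$.

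For the converse I would assume $\delta i^k\bot X$ for all $k>n$. By Observation two and the stated closure properties, $X$ is then orthogonal to $\mathrm{sk}_n\I^k\r\I^k$ for every $k$ (a finite composite of pushouts of the $\delta i^m$ with $m>n$, and an isomorphism when $k\leq n$). The adjunctions give $\Hom(\mathrm{sk}_n\I^k,X)=\Hom(f_!f^*\I^k,X)\cong\Hom(f^*\I^k,f^*X)\cong\Hom(\I^k,f_*f^*X)$, and under this identification the comparison map is precisely the component of $\eta_X$ at $\I^k$; orthogonality makes it bijective for every $k$, so $\eta_X$ is an isomorphism on all representables and hence an isomorphism by Yoneda. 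The hard part is entirely Observation two: that $\mathrm{sk}_n\I^k\r\I^k$ decomposes as pushouts of the $\delta i^m$ with $m>n$, equivalently that the cube category carries a well-behaved cellular filtration of its representables. This is exactly where semi-cubes (and cubes with reflexivities only) are needed, and why the result is not claimed in general: once degeneracies or connections enter, the meaning of ``cells attached in dimension $m$'' requires more care and the clean pushout description can break down.
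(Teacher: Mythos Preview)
The paper explicitly does not prove this lemma: it states ``We do not give a proof for the next lemma'' and cites \cite{kennett2011levels}. So there is no argument in the paper to compare against; your proposal stands on its own, and it is correct. The two adjunctions $f_!\dashv f^*\dashv f_*$ together with the skeletal filtration are exactly the standard ingredients, and your verification that the comparison $\Hom(\I^k,X)\to\Hom(\mathrm{sk}_n\I^k,X)$ coincides (through the adjunction isomorphisms) with postcomposition by $\eta_X$ is the key naturality step.

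Two minor remarks. In Observation one you say $f^*(\delta i^k)$ is \emph{the identity}; since $\delta\I^k$ is defined in the paper as an iterated pushout rather than literally as a subobject of $\I^k$, what you obtain is an isomorphism, not an identity on the nose. This is harmless for the argument. Your identification of the hard part is also on target: Observation two---that the counit $f_!f^*\I^k\to\I^k$ is a finite composite of pushouts of coproducts of $\delta i^m$ with $m>n$---is precisely the cellular (Reedy) structure of the cube category, and it is the only place where the restriction to semi-cubes or cubes with reflexivities enters. The paper makes exactly the same caveat.
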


\begin{lemma}
\label{lemmaCoskeleton}
Assume $X$ coskeletal, then so are $\I\multimap X$ and $X\leftmultimap \I$.
\end{lemma}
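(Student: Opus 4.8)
The plan is to verify the orthogonality criterion of the preceding lemma: I must show that for every $k>n$ the map $\delta i^k$ is left orthogonal to $\I\multimap X$ (and to $X\leftmultimap\I$). The key idea is that the closed structure transports this question about maps out of $\delta\I^k$ and $\I^k$ into a question about maps out of their tensors with $\I$, where the inductive presentation of the border from Definition \ref{borderInductiveDefinition} becomes available and the coskeletality of $X$ can be exploited.

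Concretely, for $\I\multimap X$ I would start from the Day adjunction $\Hom(A\otimes\I,X)\cong\Hom(A,\I\multimap X)$, natural in $A$. Applied to $\delta i^k:\delta\I^k\r\I^k$, it identifies the map $(\delta i^k)^*:\Hom(\I^k,\I\multimap X)\r\Hom(\delta\I^k,\I\multimap X)$, whose bijectivity I want, with $(\delta i^k\otimes\I)^*:\Hom(\I^{k+1},X)\r\Hom(\delta\I^k\otimes\I,X)$. Now I would factor $\delta i^k\otimes\I$ through the border: in Definition \ref{borderInductiveDefinition} the map $\alpha_{\delta\I^k}$ exhibits $\delta\I^k\otimes\I$ as one leg of the pushout defining $\delta\I^{k+1}$, so $\delta i^k\otimes\I$ factors as $\delta\I^k\otimes\I\overset{j}\r\delta\I^{k+1}\overset{\delta i^{k+1}}\r\I^{k+1}$. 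Since $k+1>n$ and $X$ is coskeletal, $(\delta i^{k+1})^*$ is a bijection by the orthogonality criterion applied to $X$. For $j^*$ I would apply the contravariant $\Hom(-,X)$ to the pushout square, turning it into a pullback; the projection $j^*$ is then the pullback of $(\delta i^k\coprod\delta i^k)^*=\big((\delta i^k)^*\big)^2$, which is a bijection because $k>n$ makes $(\delta i^k)^*$ a bijection for $X$. As a pullback of a bijection is a bijection, $j^*$ is bijective, hence so is the composite $(\delta i^k\otimes\I)^*=j^*\circ(\delta i^{k+1})^*$. This shows $\I\multimap X$ is coskeletal.

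For $X\leftmultimap\I$ the argument is formally identical but uses the other Day adjunction $\Hom(\I\otimes A,X)\cong\Hom(A,X\leftmultimap\I)$, which forces $\I$ onto the \emph{left} tensor factor. The one genuinely new ingredient, and the step I expect to be the main obstacle, is that the border must then be peeled off on the left: I need $\delta\I^{k+1}$ to be the pushout of $\I\otimes\delta\I^k$ and $\I^k\coprod\I^k$ over $\delta\I^k\coprod\delta\I^k$, the mirror image of Definition \ref{borderInductiveDefinition}. Geometrically this is just the symmetric decomposition of the boundary of a cube along its first rather than its last coordinate, but since $\square$ is not symmetric monoidal it is not literally the same pushout and must be established separately. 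Once this left-peeling presentation of the border is in hand, the factorisation-and-pullback argument above applies verbatim, giving that $X\leftmultimap\I$ is coskeletal as well.
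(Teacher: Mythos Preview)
Your proposal is correct and follows essentially the same route as the paper: translate $\delta i^k\bot(\I\multimap X)$ into $(\delta i^k\otimes\I)\bot X$ via the adjunction, factor $\delta i^k\otimes\I$ as $\delta\I^k\otimes\I\to\delta\I^{k+1}\to\I^{k+1}$ using the pushout defining the border, and handle the left internal hom by the mirror recursion for $\delta\I^{k+1}$. The only cosmetic difference is that where you unfold the pushout step explicitly (apply $\Hom(-,X)$, obtain a pullback, pull back a bijection), the paper simply invokes closure of the class $\{u:u\bot X\}$ under coproducts, pushouts and composition.
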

\begin{proof}
\begin{itemize}
\item We proceed with the proof for that $\I\multimap X$ is coskeletal when $X$ is. We need to prove that:
\begin{eqnarray} 
\delta i^k\ \bot\ (\I\multimap X)
\end{eqnarray}
for $k>n$. This is equivalent to:
\begin{eqnarray} 
(\delta i^k\otimes \I)\ \bot\ X
\end{eqnarray}
We can decompose $\delta i^k\otimes \I$ as:
\[\xymatrix{
\delta \I^k\otimes \I\ar[r]^\theta& \delta \I^{k+1} \ar[r]^{\delta i^{k+1}}& \I^{k+1}
}\]
where $\theta$ is a pushout of $\delta i^k \coprod \delta i^k$. But since maps left orthogonal to $X$ are stable by coproducts, pushouts and compositions, we can conclude.

\item To give the proof for $X\leftmultimap \I$ we need to rework this section in mirror, with an alternative equivalent definition of $\delta \I^{k+1}$ based on $\I\otimes \delta \I^{k}$ rather than $\delta \I^{k}\otimes\I$.
\end{itemize}
\end{proof}

Now we are ready to restrict the Day product from $\Set^\square$ to $\Set^{\square_n}$

\begin{lemma}
The category $\Set^{\square_n}$ inherits a monoidal structure from the Day convolution on $\Set^\square$. This induced tensor on $\Set^{\square_n}$ commutes with colimits in both variables.
\end{lemma}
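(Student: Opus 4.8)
The plan is to realise $\Set^{\square_n}$ as a reflective localisation of $\Set^\square$ and transport the Day convolution through it, via the biclosed version of Day's reflection theorem. The only genuinely new input will be a global strengthening of Lemma~\ref{lemmaCoskeleton}.

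First I would record that, since $f_*$ is full and faithful, the counit $f^*f_*\Rightarrow\Id$ is invertible, so $L:=f_*f^*$ is an idempotent monad on $\Set^\square$ whose algebras are exactly the coskeletal objects. Hence the full subcategory $\mathcal D\subseteq\Set^\square$ of coskeletal objects is reflective, with reflector $L$ and unit $\eta_X:X\to f_*f^*X$, and $f^*$ restricts to an equivalence $\mathcal D\simeq\Set^{\square_n}$ whose inverse is $f_*$. It therefore suffices to equip $\mathcal D$ with a biclosed monoidal structure that is cocontinuous in each variable and transport it along this equivalence (colimits being preserved by equivalences); the candidate tensor is $A\boxtimes B:=L(A\otimes B)$, which corresponds to $f^*(f_*A\otimes f_*B)$ on $\Set^{\square_n}$.

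By Day's reflection theorem for biclosed monoidal categories, $\mathcal D$ inherits such a structure (with $L$ strong monoidal and the internal homs $\multimap,\leftmultimap$ inherited unchanged) as soon as $\mathcal D$ is a \emph{two-sided exponential ideal}: $C\multimap X$ and $X\leftmultimap C$ are coskeletal whenever $X$ is, for every $C:\Set^\square$. This global statement is what I would deduce from Lemma~\ref{lemmaCoskeleton}. For $C$ representable I argue by induction on $k$: the Yoneda embedding is strong monoidal for the Day convolution, so the representables are exactly the $\I^{\otimes k}$; the base case $k=0$ is $1\multimap X\cong X$, and the inductive step uses the closed adjunction isomorphism $\I^{\otimes(k+1)}\multimap X\cong\I^{\otimes k}\multimap(\I\multimap X)$ together with Lemma~\ref{lemmaCoskeleton}, which makes $\I\multimap X$ coskeletal, so the inductive hypothesis applies to it. For general $C$ I write $C\cong\colim_\alpha\I^{\otimes k_\alpha}$ as a colimit of representables and use that $\multimap$ carries colimits in its first variable to limits, giving $C\multimap X\cong\lim_\alpha(\I^{\otimes k_\alpha}\multimap X)$; since each factor is coskeletal and coskeletal objects are closed under limits (orthogonality to a fixed $\delta i^k$ is preserved by limits in the target, by the orthogonality characterisation of coskeletal objects), $C\multimap X$ is coskeletal. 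The statement for $X\leftmultimap C$ is the mirror image, using the second half of Lemma~\ref{lemmaCoskeleton}.

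Finally, cocontinuity is automatic from biclosedness: on $\mathcal D$ the functors $-\boxtimes B$ and $B\boxtimes-$ are left adjoints to the inherited internal homs $B\multimap-$ and $-\leftmultimap B$, hence preserve all colimits, and transporting along $\mathcal D\simeq\Set^{\square_n}$ gives the claim. The main obstacle is the bootstrap in the third paragraph: Lemma~\ref{lemmaCoskeleton} only controls tensoring by the single generator $\I$, and the real work is promoting this to arbitrary objects, which is exactly what the induction over representables and the colimit/limit interchange accomplish.
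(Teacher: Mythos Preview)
Your proof is correct and follows essentially the same approach as the paper: the crucial step in both is the bootstrap from Lemma~\ref{lemmaCoskeleton} (closure of coskeletal objects under $\I\multimap-$ and $-\leftmultimap\I$) to the full two-sided exponential-ideal statement, and your induction over representables followed by the colimit argument is exactly what the paper does. The only difference is packaging: where you invoke Day's reflection theorem to transport the monoidal structure and deduce cocontinuity from biclosedness, the paper carries out the relevant instance of that theorem by hand, proving $f^*(f_!f^*X\otimes Y)\cong f^*(X\otimes Y)$ directly via Yoneda and then defining $X\otimes_n Y=f^*(f_!X\otimes f_!Y)$, with cocontinuity read off from the cocontinuity of $f^*$, $f_!$, and $\otimes$; your formula $f^*(f_*X\otimes f_*Y)$ is naturally isomorphic to theirs by the same exponential-ideal argument.
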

\begin{proof}
We proceed in three steps.
\begin{itemize}
\item If $Y$ is coskeletal, so are $\I\multimap Y$ and $Y \leftmultimap \I$. This is Lemma \ref{lemmaCoskeleton}, and this is the only part which rely on the kind of cube we use.

\item If $Y$ is coskeletal, so are $X\multimap Y$ and $Y \leftmultimap X$. By iterating the previous point, the property is true when $X$ is representable, and it is stable by colimits because coskeletal objects are stable by limits as $f_*$ and $f^*$ preserve limits.
\item For any $X,Y:\Set^\square$ we have natural isomorphisms:
\begin{eqnarray}
f^*(f_!f^*(X)\otimes Y) &\cong& f^*(X\otimes Y) \\
f^*(X\otimes f_!f^*(Y)) &\cong& f^*(X\otimes Y) 
\end{eqnarray}
indeed we have for example:
\begin{eqnarray}
\Hom(f^*(f_!f^*(X)\otimes Y),Z) & \cong & \Hom(f_!f^*(X)\otimes Y,f_*(Z))\\
&\cong& \Hom(f_!f^*(X),Y\multimap f_*(Z))\\
&\cong& \Hom(X,f_*f^*(Y\multimap f_*(Z)))\\
&\cong& \Hom(X,Y\multimap f_*(Z)) \label{isoCoskeletal}\\
&\cong& \Hom(X\otimes Y,f_*(Z))\\
&\cong& \Hom(f^*(X\otimes Y),Z)
\end{eqnarray}
where Equation \ref{isoCoskeletal} used the fact that $f_*(Z)$ is coskeletal, so that: 
\begin{eqnarray}
Y\multimap f_*(Z)
\end{eqnarray}
is coskeletal as well by the previous point. 

\item For $X,Y:\Set^{\square_n}$, we can define:
\begin{eqnarray} 
X\otimes_n Y &=& f^*(f_!(X)\otimes f_!(Y))\\
1_n &=& f^*(1)
\end{eqnarray}
Using the previous point we can check that this gives a monoidal structure.
\end{itemize}
The functor $\_\otimes_n\_$ commutes with colimits in both variables because so does $\_\otimes\_$, and $f^*$ and $f_!$ commutes with colimits.
\end{proof}

This proposition does depend on $\square$ when we check that if $Y$ is coskeletal, so are $\I\multimap Y$ and $Y \leftmultimap \I$. This is true for semi-cubes and cubes with reflexivities only, but we do not how far this can be extended.

\begin{proposition}
The free lex category generated by $\square_n$ is (non-strict) monoidal.
\end{proposition}
\begin{proof}
By duality, the previous lemma gives a monoidal structure $\_\otimes_n\_$ on $(\Set^{\square_n})^{op}$, commuting with limits in both variables. 

The free lex category generated by $\square_{n}$ is equivalent to the closure of representables in $(\Set^{\square_n})^{op}$ by finite limits. We want to restrict $\_\otimes_n\_$ to this closure. Since $\_\otimes_n\_$ commutes with finite colimits in both variables, it is sufficient to prove that $\I^k\otimes_n\I^{k'}$ is a finite colimit of representable in $\Set^{\square_n}$ for $k,k'\leq n$ in order to conclude. We will prove the more general fact that $f^*(\I^l)$ is a finite limit of $\I^k$ with $k\leq n$ for all $l$.

We know that for any object $X:\Set^{\square_n}$ we have:
\begin{eqnarray}
X &\cong& \colim_{\, i : \square_n^{op}, X(i)}\, \Hom_\square(i,\_)
\end{eqnarray}
so that $X$ is a colimit of representable. This is called the co-Yoneda lemma.

If $X=f^*(\I^l)$, then $X(i)$ is finite (as $X(i) = \Hom_\square(l,i)$ and $\square$ is locally finite), so that $X$ is a finite limit of representables.
\end{proof}

Here we used the fact that $\square$ was locally finite (i.e. has finite hom-sets) in a crucial way.

\begin{remark}
The monoidal category from the previous proposition is not strict, so that technically it is not a notion of parametricity. We expect that this can be worked around using a $2$-category of models of type theory, or alternatively a strictification result for lex monoidal categories.
\end{remark}

A $\square_{n}$-parametric lex category has an endofunctor $\__*$ such that for any $X$, the object $X_{*(n+1)}$ can be computed as a limit from $X_{*k}$ with $k\leq n$. A cofreely $\square_{n}$-parametric lex category is simply a category of $n$-cubical objects in some lex category. 

\begin{remark}
In the usual type-theoretic point of view on parametricity, the condition on $\__*$ can be reformulated simply as $X_{*(n+1)}=\top$, as the limit involved is taken care of by the complicated context in which $X_{*(n+1)}$ is defined.
\end{remark}

We conjecture that this $n$-truncated parametricity for lex categories actually works for all kinds of cubes.




\section{Parametricity for clans}

\label{Clans}

The notion of clans was designed by Joyal to help bridge the gap between type theory and homotopy theory \cite{joyal2017notes}. It is based on the idea that types can be modeled by fibrations as follows:
\begin{itemize}
\item A fibration $f:A\r \Gamma$ is a map with fibers varying continuously in $\Gamma$.
\item A dependent type over $\Gamma$ is family of types varying continuously in $\Gamma$.
\end{itemize}
So a dependent type over a context $\Gamma$ is modeled in a clan by a fibration with target $\Gamma$. 

Axiomatising the notion of fibration has been done in many different ways in the past, notably in model categories \cite{quillen1967axiomatic} and categories of fibrant objects \cite{brown1973abstract}. Fibrations in clans retain very little features from these classical homotopical axiomatisations. 

There is no major new idea compared to the last section, but everything is technically more involved. First we will give our unusually strict definition of clans and show that the category of clans is symmetric monoidal closed. Then we will give examples of notions of parametricity for clans and examine the corresponding cubical models.

\subsection{Clans with strictly commuting limits}

Here we give our definition of clans. We define the usual clans and call them weak clans, so that we can define clans as weak clans with strictly commuting limits. First we give an auxiliary definition.

\begin{definition}
A class of maps $\F$ is called closed by pullbacks if for any: 
\begin{eqnarray}
f:c\r b\\
g : a\r b
\end{eqnarray}
with $f$ in $\F$, there exists a pullback square:
\[\xymatrix{
a\pullback{b} c \ar[d]_{\pi_a}\ar[r]^{\pi_c} & c\ar[d]^f\\
a \ar[r]_{g}& b
}\]
where $\pi_a$ is in $\F$. For any commutative square:
\[\xymatrix{
d\ar[d]_{p}\ar[r]^{q} & c\ar[d]^f\\
a \ar[r]_{g}& b
}\]
we write:
\begin{eqnarray}
\langle p,q\rangle &:& d\r a\pullback{b} c
\end{eqnarray}
the unique map such that:
\begin{eqnarray}
\pi_a\circ\langle p,q\rangle &=& p\\
\pi_c\circ\langle p,q\rangle &=& q
\end{eqnarray}
\end{definition}

Recall that clans attempt to capture the intuition of a class of maps with continuously varying fibers.

\begin{definition}
A weak clan is a category with a terminal object $\top$, together with a class of maps called fibrations such that:
\begin{itemize}
\item Fibrations are closed by isomorphisms, compositions and pullbacks.
\item Maps to $\top$ are fibrations.
\end{itemize} 
\end{definition}

We write:
\begin{eqnarray}
f: a &\twoheadrightarrow&b
\end{eqnarray}
to say that $f$ is a fibration. 


\begin{remark}
The map $\id_a$ is isomorphic to the pullback of $\id_\top$ along the unique map from $a$ to $\top$, so that it is a fibration. 
\end{remark}

\begin{remark}
Cartesian products are defined in any weak clan as pullbacks:
\[\xymatrix{
a\times b\ar[r]^{\pi_b}\ar[d]_{\pi_a} & b\ar[d]\\
a \ar[r]& \top
}\]
So cartesian projections are fibrations. They correspond to non-dependent types.
\end{remark}

\begin{remark}
We have a correspondence between weak clans and type theory, where a type over a context $\Gamma$ is a fibration with target $\Gamma$. We explain assumptions on fibrations from this point of view:
\[\begin{tabular}{|c|c|c|}
\hline
Fibrations & Types \\
\hline
Pullbacks & Substitutions\\
Identities & $\top$ \\
Compositions & $\Sigma$\\ 
Maps to $\top$ & Democracy\\ 
Projections & Non-dependent types\\ 
\hline 
\end{tabular}\]
\end{remark}

Now we want to define strictly commuting limits for clans. We recall some vocabulary:

\begin{definition}
A square:
\[\xymatrix{
d\ar@{->>}[r]\ar@{->>}[d] & c\ar@{->>}[d]\\
a\ar@{->>}[r] & b\\
}\]
is called Reedy fibrant if the induced map:
\begin{eqnarray}
d &\r& a\pullback{b}c 
\end{eqnarray}
is a fibration.
\end{definition}

Now we can define clans.

\begin{definition}
A clan is a weak clan with strictly commuting limits. This means that:
\begin{itemize}
\item The canonical morphism:
\begin{eqnarray}
\top \r \top\pullback{\top}\top
\end{eqnarray}
is the identity.
\item For any diagram:
\[\xymatrix{
 \*\ar[r]\ar[d]&  \*\ar[d] & \*\ar@{->>}[l]\ar[d]\\
   \* \ar[r]&  \* &  \*\ar@{->>}[l]\\
    \* \ar[r]\ar@{->>}[u]&  \*\ar@{->>}[u] &  \*\ar@{->>}[l]\ar@{->>}[u]\\
}\]
where the bottom right square is Reedy fibrant, the canonical morphism between its limit computed row-by-row or column-by-column is the identity.
\end{itemize} 
\end{definition}
\begin{proof}
To check that this is meaningful, we have to check that for any diagram:
\[\xymatrix{
    a \ar@{->>}[r]\ar@{->>}[d] &  b\ar@{->>}[d] &  c\ar[l]\ar@{->>}[d]\\
   a' \ar@{->>}[r]&  b' &  c'\ar[l]\\
  }\]
with the left square Reedy fibrant, we have that the induced map:
\begin{eqnarray}
a\pullback{b}c &\r& a'\pullback{b'}c' 
\end{eqnarray}
is a fibration. Indeed the following diagram shows that it is isomorphic to the composite of pullbacks of fibrations. 
\[\xymatrix{
  a\pullback{b} c \ar[r]\ar[d]& a\ar@{->>}[d]\\
  (a'\pullback{b'}b)\pullback{b} c\ar[d]_{\cong} \ar[r]& (a'\pullback{b'}b)\\
a'\pullback {b'}c\ar[r]\ar[d]& c\ar@{->>}[d] \\
  a'\pullback{b'}c'\ar[r] & c'
}\]
\end{proof}

We will see how strictly commutative limits ensure the existence of a linear arrow for clans. As for lex categories this restriction is necessary because we consider morphisms of clans commuting with limits on the nose. We could presumably use the 2-category of weak clans, without further assumptions on limits. Now we give an auxiliary lemma on free clans.

\begin{lemma}
\label{freeClans}
Fibrations in the free clan are maps isomorphic to projections. 
\end{lemma}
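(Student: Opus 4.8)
The plan is to characterise the fibrations of the free clan $F(\C)$ by showing they coincide with the class $\mathcal{P}$ of all maps isomorphic, in the arrow category, to a cartesian projection $\pi_a : a\times b\r a$. I would establish the two inclusions separately: that every projection is a fibration, and that every fibration is isomorphic to a projection. The first is direct, while the second will rest on the universal property of the free clan, exploiting that its fibration structure is the smallest one compatible with the clan axioms.

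For the inclusion $\mathcal{P}\subseteq\{\text{fibrations}\}$ I would note that, by the very definition of cartesian product in a weak clan, the projection $\pi_a:a\times b\r a$ arises as the pullback of the fibration $b\r\top$ along $a\r\top$; since fibrations are closed under pullbacks, $\pi_a$ is a fibration, and closure under isomorphism then yields all of $\mathcal{P}$. In the same breath I would verify that $\mathcal{P}$ is itself a clan structure on the underlying category of $F(\C)$: it contains every map to $\top$ (as $a\r\top$ is the projection $\top\times a\r\top$ under $\top\times a\cong a$), it is trivially closed under isomorphism, it is closed under composition (using the associativity isomorphism $(c\times d)\times b\cong c\times(d\times b)$ to recombine two projections into one), and it is closed under pullback (using $W\pullback{Y}(Y\times Z)\cong W\times Z$). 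Write $F(\C)_{\mathcal{P}}$ for the underlying category of $F(\C)$ equipped with $\mathcal{P}$ as its fibrations.

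For the reverse inclusion I would argue by minimality. Since $\mathcal{P}\subseteq\{\text{fibrations of }F(\C)\}$, the identity functor is a morphism of clans $j:F(\C)_{\mathcal{P}}\r F(\C)$. On the other hand the generating functor $\C\r F(\C)_{\mathcal{P}}$, together with the universal property of the free clan, extends to a morphism of clans $\bar\eta:F(\C)\r F(\C)_{\mathcal{P}}$. The composite $j\circ\bar\eta$ is then a clan endomorphism of $F(\C)$ extending the generating functor, so by the uniqueness clause of the universal property it equals the identity; hence $\bar\eta$ is the identity on the underlying category. As $\bar\eta$ is a clan morphism it preserves fibrations, which forces every fibration of $F(\C)$ to lie in $\mathcal{P}$, completing the proof.

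I expect the main obstacle to be this reverse inclusion, specifically pinning down that the free clan carries exactly the smallest fibration class; the universal-property argument above is the cleanest route I see, but it presupposes that $\mathcal{P}$ genuinely defines a clan, and the composition-closure step — where two projections must be reassembled into a single one via product associativity — is the fiddly point. A secondary subtlety is that the whole argument needs an explicit enough grip on $F(\C)$ to know its objects are finite products of generators; I would make this precise by describing $F(\C)$ as the free category with finite products on $\C$, with projections as the designated fibrations.
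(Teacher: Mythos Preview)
Your proposal is correct and follows essentially the same approach as the paper: both verify that maps isomorphic to projections form a valid class of fibrations (checking closure under isomorphism, composition, pullback, and that maps to $\top$ belong to it), and then conclude by minimality. The only difference is that the paper leaves the minimality step implicit in the phrase ``So fibrations are precisely maps isomorphic to projections,'' whereas you spell it out via the universal property argument with $\bar\eta$; this makes your version slightly more explicit but not genuinely different.
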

\begin{proof}
It is straightforward to see that any clan has cartesian products, and that maps isomorphic to cartesian projections are always fibrations. Now we prove that the class of maps isomorphic to a projection is a valid class of fibrations:
\begin{itemize}
\item They are stable by isomorphisms by definition.
\item They are stable by compositions as for any objects $a,b,c$ we have:
\[\xymatrix{
(a\times b)\times c \ar[r]^{\pi_{a\times b}}\ar[d]_\cong & a\times b \ar[r]^{\pi_a} & a\ar[d]^\id\\
a\times (b\times c) \ar[rr]_{\pi_\Gamma} & & a
}\]
\item They are stable by pullbacks because for any objects $a,b,c$ and morphism $\sigma:c\r a$, the square:
\[\xymatrix{
c\times b\ar[r]^{\sigma\times b}\ar[d]_{\pi_c} & a\times b\ar[d]^{\pi_a} \\
c\ar[r]_\sigma & a
}\]
is a pullback square.
\item Maps to $\top$ are isomorphic to projections:
\[\xymatrix{
a \ar[r]^{\cong} & \top\times a \ar[r]^{\pi_\top} & \top
}\]
\end{itemize}
So fibrations are precisely maps isomorphic to projections.
\end{proof}

\subsection{Monoidal structure on clans}

Now we define the tensor product, the linear arrow and the unit of clans.

\begin{definition}
Let $\C$ and $\D$ be clans. We define a clan $\C\multimap\D$ where:
\begin{itemize}
\item Objects are morphisms of clans.
\item Morphisms are natural transformations between the underlying functors.
\item Limits are computed pointwise.
\item A natural transformation $\alpha : F\r G$ is a fibration if for any fibration $c\twoheadrightarrow c'$ in $\C$ we have an induced fibration in $\D$:
\begin{eqnarray}
F(c) \twoheadrightarrow F(c')\pullback{G(c')} G(c)
\end{eqnarray}
\end{itemize}
\end{definition}

\begin{proof}
We need to check that this is indeed a clan. As a preliminary result, we prove that fibrations are pointwise fibrations. Indeed for: 
\begin{eqnarray}
\alpha&:&F\twoheadrightarrow G
\end{eqnarray} 
a fibration and $c$ in $\C$, from the fact that $c\twoheadrightarrow\top$ we have a fibration:
\begin{eqnarray}
H(c) &\twoheadrightarrow& H(\top)\pullback{G(\top)}G(c)
\end{eqnarray}
but this map is isomorphic to the map $\alpha(c) : H(c)\r G(c)$. Now we check that limits are well defined:

\begin{itemize} 

\item The constant functor $F_\top$ with value $\top$ is a morphism of clan, giving a terminal object in $\C\multimap\D$. The main thing to check is that the induced map:
\begin{eqnarray}
F_\top(a\pullback{b}c)&\r&F_\top(a)\pullback{F_\top(b)}F_\top(c)
\end{eqnarray}
is an identity but this map is equal to the unique map in:
\begin{eqnarray}
\top&\r&\top\pullback{\top}\top
\end{eqnarray}
which is an identity.

\item Now we check that pullbacks can be defined pointwise in $\C\multimap\D$. So we assume given a diagram in $\C\multimap\D$ as follows:
\[\xymatrix{
F\ar[r] & G & H\ar@{->>}[l]
}\]
And we define $F\pullback{G}H$ by:
\begin{eqnarray}
(F\pullback{G}H)(c) &=& F(c)\pullback{G(c)}H(c)
\end{eqnarray}
We check that it commutes with limits. We need to check that the morphism:
\begin{eqnarray}
(F\pullback{G}H)(\top) &\r& \top
\end{eqnarray}
is an identity so that $F\pullback{G}H$ preserves the terminal object. But this map is the inverse to:
\begin{eqnarray}
\top&\r&\top\pullback{\top}\top
\end{eqnarray}
which is assumed to be an identity.

To check that $F\pullback{G}H$ commutes with pullbacks, we consider a diagram:
\[\xymatrix{
a\ar[r] & b & c\ar@{->>}[l]
}\]
in $\C$. We need to prove that the canonical morphism from the limit of the diagram:
\[\xymatrix{
F(a)\ar[r] \ar[d]& F(b)\ar[d] & F(c)\ar@{->>}[l]\ar[d]\\
G(a)\ar[r] & G(b) & G(c)\ar@{->>}[l]\\
H(a)\ar[r]\ar@{->>}[u] & H(b)\ar@{->>}[u] & H(c)\ar@{->>}[l]\ar@{->>}[u]
}\]
computed row-by-row and column-by-column is an identity. This is a special case of strict commutation of limits because the bottom left square is Reedy fibrant by the definition of the fibration $H\twoheadrightarrow G$.

\end{itemize}

Next we check that the class of fibrations is suitably closed.

\begin{itemize}

\item Stability by isomorphisms is straightforward.

\item Compositions of fibrations are fibrations. Indeed for a diagram:
\[\xymatrix{
F\ar@{->>}[r] & G \ar@{->>}[r]& H
}\]
we need to prove that for any fibration $c\twoheadrightarrow c'$ in $\C$ the map:
\begin{eqnarray}
F(c)\r F(c')\pullback{H(c')}H(c)
\end{eqnarray}
is a fibration in $\D$. But this map is a composite of fibrations as follows:
\[\xymatrix{
F(c) \ar@{->>}[d] & \\
F(c')\pullback{G(c')}G(c) \ar[d]\ar[r] & G(c)\ar@{->>}[d] \\ 
F(c')\pullback{G(c')} (G(c')\pullback{H(c')}H(c)) \ar@{->>}[d]^{\cong}\ar[r] & G(c')\pullback{H(c')}H(c) \\
F(c')\pullback{H(c')}H(c)&  \\
}\]

\item Pullbacks of fibrations are fibrations. We have already checked that they exist and are computed pointwise. Now we need to check that given a diagram:
\[\xymatrix{
F\ar[r] & G & H\ar@{->>}[l]
}\]
in $\C\multimap\D$ the map:
\begin{eqnarray}
F\pullback{G}H\r F
\end{eqnarray}
is a fibration, so that we need to prove that for any fibration $c\twoheadrightarrow c'$ in $\C$, the induced map:
\begin{eqnarray}
(F\pullback{G}H)(c)\r F(c) \pullback{F(c')} (F\pullback{G}H)(c')
\end{eqnarray}
is a fibration. But this map is isomorphic to the map:
\begin{eqnarray}
F(c)\pullback{G(c)}H(c) \r F(c)\pullback{G(c')} H(c')
\end{eqnarray}
 which is isomorphic to a pullback of fibration as follows:
 \[\xymatrix{
 F(c)\pullback{G(c)}H(c) \ar[d]\ar[r] & H(c)\ar@{->>}[d]\\
F(c)\pullback{G(c)}(G(c)\pullback{G(c')}H(c'))\ar[r] \ar[d]^{\cong} & G(c)\pullback{G(c')}H(c')\\
F(c)\pullback{G(c')} H(c') & \\
 }\]
 
\item Maps to $\top$ are fibrations, indeed the unique map:
\begin{eqnarray}
F &\r&\top
\end{eqnarray}
is a fibration if for any fibration: 
\begin{eqnarray}
c\twoheadrightarrow c'
\end{eqnarray} 
in $\C$ the map:
\begin{eqnarray}
F(c) \r F(c')\pullback{\top}\top
\end{eqnarray}
is a fibration in $\D$. But this map is isomorphic to the map:
 \begin{eqnarray}
F(c) &\r& F(c')
\end{eqnarray}
which is a fibration because $F$ preserves fibrations.

\end{itemize}

Finally we can check that limits are strictly commuting because they are computed pointwise.
\end{proof}

Now we define the tensor product of two clans.

\begin{definition}
Given clans $\C$ and $\D$, we define $\C\otimes\D$ as the free clan generated by:
\begin{itemize}
\item A functor:
\begin{eqnarray}
\_\otimes\_ : \C\times\D\r \C\otimes\D 
\end{eqnarray}
\item The fact that for any two fibrations:
\begin{eqnarray} 
c&\twoheadrightarrow& c'\\
d&\twoheadrightarrow& d'
\end{eqnarray} 
we have a fibration:
\begin{eqnarray}
c\otimes d &\twoheadrightarrow& (c'\otimes d)\pullback{c'\otimes d'}(c\otimes d')
\end{eqnarray}
\item For any object $c:\C$ and $d:\D$ the canonical morphisms:
\begin{eqnarray}
c\otimes\top &\r& \top\\
\top\otimes d &\r& \top
\end{eqnarray}
are identities.
\item Given a span:
\[\xymatrix{c_1\ar[r] & c_2 & c_3\ar@{->>}[l]}\]
in $\C$ and a span:
\[\xymatrix{d_1\ar[r] & d_2 & d_3\ar@{->>}[l]}\]
in $\D$ with $c:\C$ and $d:\D$, the canonical morphisms:
\begin{eqnarray}
(c_1\pullback{c_2}c_3)\otimes d &\r& (c_1\otimes d)\pullback{c_2\otimes d}(c_3\otimes d)\\
c\otimes(d_1\pullback{d_2}d_3)&\r& (c\otimes d_1)\pullback{c\otimes d_2}(c\otimes d_3)
\end{eqnarray}
are identities.
\end{itemize}
\end{definition}


\begin{proof}
To check that the last axiom makes sense, we need to show that for any fibration $c\twoheadrightarrow c'$ in $\C$ and $d$ and object in $\D$ we have a fibration:
\begin{eqnarray}
c\otimes d &\twoheadrightarrow& c'\otimes d
\end{eqnarray}
and the same with the role of $\C$ and $\D$ reversed. This is true because:
\begin{itemize}
\item By the second assumption in $\C\otimes\D$ applied to $c\twoheadrightarrow c'$ and $d\twoheadrightarrow\top$ we have:
\begin{eqnarray}
c\otimes d &\twoheadrightarrow& (c'\otimes d)\pullback{c'\otimes\top} (c\otimes\top)
\end{eqnarray}
\item By the third assumption, this fibration is isomorphic to:
\begin{eqnarray}
c\otimes d &\twoheadrightarrow& c'\otimes d
\end{eqnarray}
\end{itemize}
\end{proof}

\begin{remark}
In homotopy theory the usual condition on a tensor is dual, i.e. it is a condition of stability for cofibrations, involving a pushout rather than a pullback. This is not surprising because we use the opposite from the standard categories of cubes. 
\end{remark}

Finally we define the unit.

\begin{definition}
We define $1$ as the free clan generated by an object.
\end{definition}

\begin{remark}
The clan $1$ is the opposite to the category of finite sets, with monomorphisms as fibrations.
\end{remark}

And now we are ready to prove that this defines a monoidal structure. We will proceed exactly as for lex categories, with a few more properties to check in order to take fibrations into account. We denote by $F$ the left adjoint to the forgetful functor $U$ from clans to categories. 

\begin{lemma}
\label{enrichedAdjunctionClans}
For any category $\C$ and clan $\D$ we have a natural isomorphism:
\begin{eqnarray}
U(F(\C)\multimap\D) &\cong& U(\D)^\C
\end{eqnarray}
Moreover:
\begin{itemize} 
\item Fibrations in $F(\C)\multimap\D$ correspond to natural transformations in $U(\D)^\C$ which are pointwise fibrations.
\item Limits in  $F(\C)\multimap\D$ correspond to pointwise limits in $U(\D)^\C$.
\end{itemize}
\end{lemma}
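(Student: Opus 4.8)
The plan is to follow the proof of Lemma \ref{enrichedAdjunction} for lex categories, adding the bookkeeping needed to track fibrations. The underlying isomorphism of categories comes directly from the free-forgetful adjunction $F\dashv U$ between $\Cat$ and $\Clan$: an object of $F(\C)\multimap\D$ is a clan morphism $F(\C)\r\D$, which by adjunction is the same as a functor $\C\r U(\D)$, i.e. an object of $U(\D)^\C$; write $\overline X$ for the functor transposing a clan morphism $X$. For the morphisms, the key structural observation I would record first is that, by Lemma \ref{freeClans}, the objects of $F(\C)$ are exactly the finite products of objects of $\C$. Indeed fibrations in $F(\C)$ are projections, products are pullbacks of maps to $\top$, and a pullback of a projection $a\times b\r a$ along any map is again a product; so closing the generators under the clan operations produces nothing beyond finite products. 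Consequently a natural transformation between the underlying functors of two clan morphisms is determined by its restriction to $\C$ via $\alpha(c_1\times\cdots\times c_k)=\alpha(c_1)\times\cdots\times\alpha(c_k)$, and conversely every natural transformation on $\C$ extends by this formula; this yields the bijection on morphisms, hence the isomorphism $U(F(\C)\multimap\D)\cong U(\D)^\C$. (Alternatively, this bijection is immediate once one knows $F\dashv U$ is an adjunction of $2$-categories, so that $\Hom_\Clan(F(\C),\D)$ and $\Hom_\Cat(\C,U(\D))$ agree as categories.)

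The new content is the fibration clause, and I would treat it next. By definition $\alpha:X\r Y$ is a fibration in $F(\C)\multimap\D$ iff for every fibration $c\twoheadrightarrow c'$ in $F(\C)$ the induced map $X(c)\r X(c')\pullback{Y(c')}Y(c)$ is a fibration in $\D$. Using Lemma \ref{freeClans} again, every such fibration is, up to isomorphism, a projection $\pi:a\times b\r a$. Since $X$ and $Y$ are clan morphisms they preserve products, so $X(a\times b)=X(a)\times X(b)$ and $Y(a\times b)=Y(a)\times Y(b)$; a direct computation then identifies the pullback $X(a)\pullback{Y(a)}Y(a\times b)$ with $X(a)\times Y(b)$ and the induced map with $\id_{X(a)}\times\alpha(b):X(a)\times X(b)\r X(a)\times Y(b)$. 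This map is a fibration exactly when $\alpha(b)$ is: if $\alpha(b)$ is a fibration it is a pullback of $\alpha(b)$ along a projection, and taking $a=\top$ recovers $\alpha(b)$ itself. Hence $\alpha$ is a fibration in $F(\C)\multimap\D$ if and only if $\alpha(b)$ is a fibration in $\D$ for every object $b$ of $F(\C)$.

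To conclude I would reduce this condition to the generators. Every $b$ is a finite product $c_1\times\cdots\times c_k$ of objects of $\C$, and by the product formula above $\alpha(b)=\alpha(c_1)\times\cdots\times\alpha(c_k)$; since products of fibrations are fibrations (composites of pullbacks of fibrations), $\alpha(b)$ is a fibration for all $b$ precisely when $\alpha(c)$ is a fibration for every generator $c\in\C$. Under the isomorphism this last condition is exactly the statement that $\overline\alpha$ is a pointwise fibration in $U(\D)^\C$, giving the fibration clause. The limit clause is then immediate: limits in $F(\C)\multimap\D$ are computed pointwise by the definition of the clan $\multimap$, and since the isomorphism is restriction along the generators, pointwise limits on $F(\C)$ correspond to pointwise limits on $\C$. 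I expect the main obstacle to be the fibration clause — specifically getting the induced-map computation right and justifying the reduction to products — and the crucial input making it work is the description of $F(\C)$ as finite products of generators furnished by Lemma \ref{freeClans}.
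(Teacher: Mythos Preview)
Your proposal is correct and follows essentially the same approach as the paper: both use Lemma \ref{freeClans} to identify fibrations in $F(\C)$ with projections, compute the induced map for a projection as $\id_{X(a)}\times\alpha(b)$, and then reduce the pointwise-fibration condition from all of $F(\C)$ to the generators in $\C$ via the product description of objects. The only cosmetic difference is that the paper establishes the bijection on morphisms by passing through the arrow clan $\D^\r$ (as in Lemma \ref{enrichedAdjunction}) rather than by your direct argument via extension along products, but this is the same content.
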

\begin{proof}
The natural isomorphism is very similar to the proof for lex categories in Lemma \ref{enrichedAdjunction}, using the clan structure on $\D^\r$ with the fibrations defined pointwise. The new result we need to prove is that for any two morphisms of clans $X,Y:F(\C)\multimap\D$ with:
\begin{eqnarray}
Z&:&\Hom_\Cat(F(\C),\D^\r)\\
S\circ Z &=& X\\
T\circ Z &=& Y
\end{eqnarray}
we have that $Z$ sends fibrations to pointwise fibrations in $\D^\r$. This means that for any fibration $i : c\twoheadrightarrow c'$ in $F(\C)$ we have that the vertical maps in:
\[\xymatrix{
X(c)\ar[d]_{X(i)}\ar[r]^{Z(c)} & Y(c)\ar[d]^{Y(i)}\\
X(c') \ar[r]_{Z(c')}& Y(c')
}\]

are fibrations. This is true because $X$ and $Y$ preserves fibrations.

Now we want to check that fibrations in $F(\C)\multimap\D$ corresponds to pointwise fibrations in $U(\D)^\C$. Assume given:
\begin{eqnarray} 
X, Y &:&F(\C)\multimap\D\\
\alpha &:& \Hom_{F(\C)\multimap\D}(X,Y)
\end{eqnarray}
with:
\begin{eqnarray} 
\overline{X}, \overline{Y} &:&U(\D)^\C\\
\overline{\alpha} &:& \Hom_{U(\D)^\C}(\overline{X},\overline{Y})
\end{eqnarray}
the corresponding elements in $U(\D)^\C$. The following are equivalent: 

\begin{itemize}
\item The morphism $\alpha$ is a fibration.
\item For all $c\twoheadrightarrow c'$ in $F(\C)$, we have an induced fibration:
\begin{eqnarray}
X(c) &\twoheadrightarrow& X(c')\pullback{Y(c')}Y(c)
\end{eqnarray}
\item For all $c$ and $c'$ in $F(\C)$, we have an induced fibration:
\begin{eqnarray}
X(c)\times X(c') &\twoheadrightarrow& X(c)\times Y(c')
\end{eqnarray}
(because any fibration in $F(\C)$ is isomorphic to a projection $c\times c'\r c$ by Lemma \ref{freeClans} and we have:
\[\xymatrix{
X(c\times c')\ar[r]\ar[d]_\cong & X(c)\pullback{Y(c)}Y(c\times c')\ar[d]^\cong\\
X(c)\times X(c')\ar[r] & X(c)\times Y(c')
}\]
so the top arrow is a fibration if and only if the bottom one is).
\item For all $c'$ in $F(\C)$ we have an induced fibration:
\begin{eqnarray}
X(c') &\twoheadrightarrow& Y(c') 
\end{eqnarray}
(because $X(c')\twoheadrightarrow Y(c')$ implies that for any $d:\D$ we have a fibration:
\begin{eqnarray}
d\times X(c') &\twoheadrightarrow& d\times Y(c')
\end{eqnarray}
and we consider $d=X(c)$).
\item For all $c$ in $\C$ we have an induced fibration:
\begin{eqnarray}
\overline{X}(c) &\twoheadrightarrow& \overline{Y}(c) 
\end{eqnarray}
(because $X$ and $Y$ are the extensions of $\overline{X}$ and $\overline{Y}$ commuting with cartesian products, so the condition implies that $X(c)\twoheadrightarrow Y(c)$ for all $c=c_1\times\cdots\times c_n$ and any object in $F(\C)$ is of this form).
\item The morphism $\overline{\alpha}$ is a pointwise fibration.
\end{itemize}

The fact that limits are computed pointwise is straightforward.
\end{proof}

\begin{theorem}
This defines a symmetric monoidal closed structure on the category of clans.
\end{theorem}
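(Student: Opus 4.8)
The plan is to mirror the proof of Theorem~\ref{monoidalLexCat} for lex categories, using the clan analogue of Lemma~\ref{enrichedAdjunction}, namely Lemma~\ref{enrichedAdjunctionClans}, in place of the lex one. The four things to verify are: the tensor-hom adjunction $\Hom_{\Clan}(\C\otimes\D,\E)\cong\Hom_{\Clan}(\C,\D\multimap\E)$, symmetry and associativity of $\_\otimes\_$, and that $1$ is a unit. As before, most coherence diagrams will be omitted; the substance is in the adjunction and in checking that the structure maps respect fibrations.

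For the adjunction, I would proceed exactly as in the lex case: a clan morphism $\C\otimes\D\r\E$ is, by the universal property defining $\C\otimes\D$, the same as a functor $F:\C\times\D\r\E$ that sends the relevant pullbacks to identities \emph{and} sends pairs of fibrations to the Reedy-fibration condition $c\otimes d\twoheadrightarrow(c'\otimes d)\pullback{c'\otimes d'}(c\otimes d')$. Currying gives $\overline{F}:\C\r\E^\D$. The limit-preservation conditions in each variable translate, just as for lex categories, into the statement that $\overline{F}$ is a clan morphism landing in the full subcategory $\D\multimap\E$ (limits in $\E^\D$ being pointwise). The genuinely new bookkeeping is the fibration condition: I would check that the biclosed fibration axiom for $\otimes$ is precisely equivalent, after currying, to the requirement that $\overline{F}$ preserves fibrations in the sense defining fibrations of $\D\multimap\E$. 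This is where Lemma~\ref{enrichedAdjunctionClans}'s characterisation of fibrations in $F(\C)\multimap\D$ as pointwise fibrations does the work, together with Lemma~\ref{freeClans} reducing fibrations in a free clan to projections.

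For symmetry and associativity I would define $S:\C\times\D\r\D\otimes\C$ by $S(c,d)=d\otimes c$ and $A:(\C\times\D)\times\E\r\C\otimes(\D\otimes\E)$ by $A((c,d),e)=c\otimes(d\otimes e)$, check that each commutes with the defining pullbacks in every variable and sends tupled fibrations to fibrations, hence extends to a clan morphism by the universal property; the inverses are constructed symmetrically and shown to be mutually inverse on generators. For the unit, I would invoke Lemma~\ref{enrichedAdjunctionClans} applied to $1=F(\top)$ to obtain $U(1\multimap\C)\cong U(\C)^{\top}\cong U(\C)$ naturally, with fibrations and limits matching up on both sides, giving $1\multimap\C\cong\C$ in $\Clan$ and hence $1$ as a two-sided unit via the established adjunction.

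The main obstacle, and the only place the argument departs meaningfully from the lex proof, is verifying that the fibration conditions line up across the currying isomorphism. One must confirm that the Reedy-type condition built into the tensor's universal property corresponds \emph{exactly} to the fibration condition in $\D\multimap\E$, and that the strictly commuting limits (needed already to make $\D\multimap\E$ a well-defined clan) render the two sides of the adjunction compatible on the nose rather than merely up to isomorphism. I expect this to reduce, via Lemma~\ref{enrichedAdjunctionClans} and Lemma~\ref{freeClans}, to the kind of pullback-pasting diagrams already displayed in the proofs of the arrow and tensor being clans, so no essentially new idea is required beyond careful diagram chasing. As with the earlier theorem, I would close by remarking that the coherence axioms are routine and omitted.
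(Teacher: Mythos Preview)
Your plan matches the paper's proof: extend the lex argument, with the new content being the check that the fibration conditions correspond under currying, and use Lemma~\ref{enrichedAdjunctionClans} at $1=F(\top)$ for the unit. One correction: in the adjunction step you invoke Lemma~\ref{enrichedAdjunctionClans} and Lemma~\ref{freeClans}, but neither applies there, since $\C$ and $\D$ are arbitrary clans, not free ones, and fibrations in $\D\multimap\E$ are \emph{not} pointwise in general. The paper instead verifies the correspondence by direct unfolding: $F$ preserving fibrations means that for $c\twoheadrightarrow c'$ and $d\twoheadrightarrow d'$ one has $F(c\otimes d)\twoheadrightarrow F(c'\otimes d)\pullback{F(c'\otimes d')}F(c\otimes d')$, while $\overline{F}$ preserving fibrations means $\overline{F}(c)\twoheadrightarrow\overline{F}(c')$ is a fibration in $\D\multimap\E$, which by definition of fibrations there is the same condition. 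So the check you correctly identify as the crux is simpler than you suggest, and those two lemmas are used only at the unit step, exactly where you also cite them. The symmetry and associativity checks are as you describe; the paper carries out associativity by exhibiting the two iterated Reedy targets as limits of explicit $3\times 2$ diagrams, which is the diagram chase you anticipate.
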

\begin{proof}
The proof is an extension of the proof of Theorem \ref{monoidalLexCat} from lex categories to clans, where we have to check at every step that it can be adapted to fibrations.
\begin{itemize}
\item Concerning the isomorphisms:
\begin{eqnarray}
\Hom_\Clan(\C\otimes\D,\E) &\cong& \Hom_\Clan(\C,\D\multimap\E)
\end{eqnarray}
We know from the lex case that giving a limit preserving functor:
\begin{eqnarray}
F &:& \Hom_\Cat(\C\otimes\D,\E) 
\end{eqnarray}
is equivalent to giving a limit preserving:
\begin{eqnarray}
\overline{F} &:& \Hom_\Cat(\C,\D\multimap\E) 
\end{eqnarray}
so it is enough to check that $F$ preserves fibrations if and only if so does $\overline{F}$.

\subitem We have that $F$ is preserves fibrations if for all fibration $c\twoheadrightarrow c'$ in $\C$ and $d\twoheadrightarrow d'$ in $\D$ we have a fibration:
\begin{eqnarray}
F(c\otimes d) &\twoheadrightarrow& F((c'\otimes d)\pullback{c'\otimes d'}(c\otimes d'))\\
&=& F(c'\otimes d)\pullback{F(c'\otimes d')}F(c\otimes d')
\end{eqnarray}
in $\E$.
\subitem On the other hand $\overline{F}$ preserves fibrations if and only if for any fibration $c\twoheadrightarrow c'$ in $\C$ we have a fibration:
\begin{eqnarray}
\overline{F}(c) &\twoheadrightarrow& \overline{F}(c')
\end{eqnarray}
in $\D\multimap \E$, which means precisely that for all fibration $d\twoheadrightarrow d'$ we have a fibration:
\begin{eqnarray}
\overline{F}(c,d) &\twoheadrightarrow& \overline{F}(c,d')\pullback{\overline{F}(c',d')}\overline{F}(c',d)
\end{eqnarray}
so we see that both conditions are equivalent.

\item Next we check symmetry. To extend the result for lex categories, we need to check that:
\begin{eqnarray} 
S&:&\C\otimes\D\r\D\otimes\C
\end{eqnarray} 
preserves fibrations, meaning that for fibrations $c\twoheadrightarrow c'$ in $\C$ and $d\twoheadrightarrow d'$ in $\D$ we have an induced fibration:
\begin{eqnarray}
S(c\otimes d) &\twoheadrightarrow& S((c'\otimes d)\pullback{c'\otimes d'}(c\otimes d'))
\end{eqnarray}
but this map is equal to:
\begin{eqnarray}
d\otimes c &\twoheadrightarrow& (d\otimes c')\pullback{d'\otimes c'}(d'\otimes c) \\
&\cong& (d'\otimes c)\pullback{d'\otimes c'} (d\otimes c')
\end{eqnarray}
which is a fibration in $\D\otimes\C$.

\item Then we check associativity. To extend the result for lex categories, we need to check that:
\begin{eqnarray}
A&:&(\C\otimes\D)\otimes\E \r \C\otimes(\D\otimes\E)
\end{eqnarray} 
preserves fibrations, i.e. that for any fibrations $c\twoheadrightarrow c'$ in $\C$, with $d\twoheadrightarrow d'$ in $\D$ and $e\twoheadrightarrow e'$ in $\E$, we have an induced fibration:
\begin{eqnarray}
A((c\otimes d)\otimes e) &\twoheadrightarrow& A\big(((c\otimes d)'\otimes e)'\big)
\end{eqnarray}
where we use the informal notation:
\begin{eqnarray}
(x\otimes y)' &=& x'\otimes y\pullback{x'\otimes y'}x\otimes y'
\end{eqnarray}
but the maps:
\begin{eqnarray}
(c\otimes d)\otimes e &\twoheadrightarrow& ((c\otimes d)'\otimes e)'\\
c\otimes (d\otimes e) &\twoheadrightarrow& \big(c\otimes (d\otimes e)'\big)'
\end{eqnarray}
are isomorphic as their right-hand sides are isomorphic to the limits of the diagrams:
\[\xymatrix{
(c'\otimes d)\otimes e \ar[dr]\ar[drr]  & (c\otimes d')\otimes e \ar[dr]\ar[dl]& (c\otimes d)\otimes e'\ar[dl]\ar[dll]\\
(c\otimes d') \otimes e' & (c'\otimes d)\otimes e' & (c'\otimes d')\otimes e\\
}\]
and:
\[\xymatrix{
c'\otimes (d\otimes e) \ar[dr]\ar[drr]  & c\otimes (d'\otimes e) \ar[dr]\ar[dl]& c\otimes (d\otimes e')\ar[dl]\ar[dll]\\
c\otimes (d' \otimes e') & c'\otimes (d\otimes e') & c'\otimes (d'\otimes e)\\
}\]

\item Finally we check the unit isomorphisms using the fact that:
\begin{eqnarray}
(1\multimap\C) &\cong& \C 
\end{eqnarray}
by Lemma \ref{enrichedAdjunctionClans} applied to $1=F(\top)$ with $\top$ the terminal category.
\end{itemize}
\end{proof}

\subsection{Notions of parametricity for clans}

Now we can apply our machinery to define a notion of parametricity for clans as a monoidal clan. Explicitly we have:

\begin{corollary}
A notion of parametricity for clans is a clan $\C$ with a strictly monoidal product $\_\otimes\_$ such that:
\begin{itemize}
\item For any two fibrations:
\begin{eqnarray} 
c&\twoheadrightarrow& c'\\
d&\twoheadrightarrow& d'
\end{eqnarray} 
we have an induced fibration:
\begin{eqnarray}
c\otimes d &\twoheadrightarrow& (c'\otimes d)\pullback{c'\otimes d'}(c\otimes d')
\end{eqnarray}
\item For any object $c:\C$ the canonical morphisms:
\begin{eqnarray}
c\otimes\top &\r& \top\\
\top\otimes c &\r& \top
\end{eqnarray}
are identities.
\item Given a span:
\[\xymatrix{c_1\ar[r] & c_2 & c_3\ar@{->>}[l]}\]
in $\C$ with $d:\C$, the canonical morphisms:
\begin{eqnarray}
(c_1\pullback{c_2}c_3)\otimes d &\r& (c_1\otimes d)\pullback{c_2\otimes d}(c_3\otimes d)\\
d\otimes(c_1\pullback{c_2}c_3)&\r& (d\otimes c_1)\pullback{d\otimes c_2}(d\otimes c_3)
\end{eqnarray}
are identities.
\end{itemize}
\end{corollary}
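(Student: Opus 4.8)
The plan is to unfold the definition of a monoid in $\Clan$, exactly as was done for lex categories, using the universal properties characterising the tensor $\C\otimes\C$ and the unit $1$ as free clans. A monoid consists of a clan $\C$ together with clan morphisms $\mu:\C\otimes\C\r\C$ and $\epsilon:1\r\C$ subject to the associativity and unit diagrams from the definition of a monoid in $\U$; the content of the corollary is that these data are equivalent to the listed strictly monoidal structure.

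First I would treat the multiplication. By the universal property defining $\C\otimes\C$ as a free clan, giving a clan morphism $\mu:\C\otimes\C\r\C$ is the same as giving a functor $\_\otimes\_:\C\times\C\r\C$ such that: for fibrations $c\twoheadrightarrow c'$ and $d\twoheadrightarrow d'$ the induced map $c\otimes d\twoheadrightarrow (c'\otimes d)\pullback{c'\otimes d'}(c\otimes d')$ is a fibration; the canonical maps $c\otimes\top\r\top$ and $\top\otimes c\r\top$ are identities; and the canonical maps from $(c_1\pullback{c_2}c_3)\otimes d$ and from $d\otimes(c_1\pullback{c_2}c_3)$ to the corresponding pointwise pullbacks are identities. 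These are precisely the three bulleted conditions in the statement. Then I would treat the unit: since $1$ is the free clan on a single object, a clan morphism $\epsilon:1\r\C$ is the same as the choice of an object $I:\C$, which will serve as the unit of the monoidal product.

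Next I would translate the monoid axioms into properties of $\_\otimes\_$. Chasing the associativity diagram on the objects $(a\otimes b)\otimes c$ generating $(\C\otimes\C)\otimes\C$: writing $\cdot$ for the functor $\_\otimes\_$ on $\C$ and $A$ for the associator of $\Clan$, the top composite $\mu\circ(\mu\otimes\C)$ evaluates to $(a\cdot b)\cdot c$, while the bottom composite $\mu\circ(\C\otimes\mu)\circ A$ evaluates to $a\cdot(b\cdot c)$; equality of the two composites is exactly strict associativity of $\_\otimes\_$. Similarly the two unit diagrams, run through the unitors $\C\cong\C\otimes 1$ and $\C\cong 1\otimes\C$ together with $\epsilon$, force $a\cdot I=a=I\cdot a$, so that $I$ is a strict two-sided unit. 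Thus $\_\otimes\_$ together with $I$ is a strictly monoidal product satisfying the three conditions, and conversely any such structure reassembles into clan morphisms $\mu$ and $\epsilon$ obeying the monoid axioms, the correspondence being inverse to the one above by the same universal properties.

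The main obstacle I anticipate is bookkeeping rather than conceptual: one must verify that the universal property of the tensor genuinely applies with the target clan equal to $\C$ itself, so that clan morphisms $\C\otimes\C\r\C$ are classified by functors $\C\times\C\r\C$ with exactly the three stated properties — this is the content already established when checking that the defining axioms of $\C\otimes\D$ are well posed. One must also track the non-strict associator $A$ and the unitors of $\Clan$ through the monoid diagrams to confirm that strict commutation of these diagrams forces the product on $\C$ to be strictly associative and strictly unital, despite $\Clan$ itself being only non-strictly monoidal; this is the step where care is needed, and it proceeds exactly as in the analogous unfolding for lex categories.
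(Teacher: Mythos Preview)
Your proposal is correct and matches the paper's approach: the paper presents this corollary as an immediate unfolding of the definition of a monoid in $\Clan$ via the universal properties of $\C\otimes\C$ and $1$, giving no further proof. Your account makes explicit precisely the steps the paper leaves implicit, including the translation of the associativity and unit diagrams into strict associativity and strict unitality of the product on $\C$.
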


We now give some examples. First we extend notions of parametricity for categories to clans.

\begin{proposition}
\label{freeClanMonoidal}
The functor $F$ building free clans is monoidal.
\end{proposition}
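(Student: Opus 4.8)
The plan is to follow the proof of Proposition \ref{freeLexMonoidal} essentially verbatim, replacing $\Lex$ by $\Clan$ and invoking the clan-enriched adjunction of Lemma \ref{enrichedAdjunctionClans} in place of its lex counterpart. Concretely, for categories $\C,\D$ and a clan $\E$ I would produce the chain of natural isomorphisms
\begin{align}
\Hom_\Clan(F(\C\times \D),\E) &\cong \Hom_\Cat(\C\times\D,U(\E))\\
&\cong \Hom_\Cat(\C,U(\E)^\D)\\
&\cong \Hom_\Cat(\C,U(F(\D)\multimap\E))\\
&\cong \Hom_\Clan(F(\C),F(\D)\multimap\E)\\
&\cong \Hom_\Clan(F(\C)\otimes F(\D),\E)
\end{align}
where the first and fourth steps are the adjunction $F\dashv U$ for clans, the second is the cartesian closure of $\Cat$, the third is Lemma \ref{enrichedAdjunctionClans}, and the last is the tensor--hom adjunction established in the preceding theorem on the monoidal closed structure of clans. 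Since this chain is natural in $\E$, the Yoneda lemma yields a natural isomorphism $F(\C\times\D)\cong F(\C)\otimes F(\D)$ in $\Clan$. For the unit I would simply observe that $1$ was \emph{defined} as the free clan on an object, i.e. $1=F(\top)$ with $\top$ the terminal category, so the unit comparison is an identity.

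The point worth stressing is where the fibration data is handled. In contrast to the lex case, each intermediate object now carries a class of fibrations, but no extra bookkeeping is needed at the level of this argument: the decisive isomorphism $F(\C\times\D)\cong F(\C)\otimes F(\D)$ is obtained \emph{inside} $\Clan$ via Yoneda, hence it automatically preserves the full clan structure, fibrations included. The only step where fibrations genuinely have to be tracked is the third isomorphism, and that is precisely the content of Lemma \ref{enrichedAdjunctionClans}, which already records that fibrations in $F(\C)\multimap\E$ correspond to pointwise fibrations in $U(\E)^\D$.

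Thus I do not expect a real obstacle here: the difficulty has been front-loaded into Lemma \ref{enrichedAdjunctionClans} rather than the present proof. The one place demanding mild care is checking that the composite isomorphism is natural in all three variables $\C,\D,\E$ simultaneously, so that the resulting comparison $F(\C)\otimes F(\D)\cong F(\C\times\D)$ is a genuine natural transformation making $F$ strong monoidal; but each link in the chain is manifestly natural, so this is routine. As in the lex case I would omit the verification of the coherence (associativity and unit) diagrams for the monoidal functor $F$.
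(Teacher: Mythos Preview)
Your proposal is correct and matches the paper's approach exactly: the paper's proof is the single sentence ``This is exactly the same as Lemma~\ref{freeLexMonoidal} for lex categories, using Lemma~\ref{enrichedAdjunctionClans} instead of Lemma~\ref{enrichedAdjunction},'' and you have spelled out precisely that argument, including the same chain of natural isomorphisms and the observation $1=F(\top)$.
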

\begin{proof}
This is exactly the same as Lemma \ref{freeLexMonoidal} for lex categories, using Lemma \ref{enrichedAdjunctionClans} instead of Lemma \ref{enrichedAdjunction}.
\end{proof}

\begin{corollary}
If $\C$ is a notion of parametricity for categories, then $F(\C)$ is a notion of parametricity for clans. The cubical model:
\begin{eqnarray}
F(\C)\multimap\D
\end{eqnarray}
is isomorphic to the clan where:
\begin{itemize}
\item Objects are functors from $\C$ to $\D$.
\item Morphisms are natural transformations.
\item Limits and fibrations are defined pointwise.
\end{itemize}
\end{corollary}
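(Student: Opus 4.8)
The plan is to decompose the statement into two essentially independent claims, each following from a result already in hand.

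\emph{First claim: $F(\C)$ is a notion of parametricity for clans.} By definition such a notion is a monoid in the symmetric monoidal closed category of clans, while $\C$ being a notion of parametricity for categories means it is a monoid in the (cartesian) monoidal category of categories. Proposition \ref{freeClanMonoidal} tells us that $F$ is (strong) monoidal, and monoidal functors carry monoids to monoids, so $F(\C)$ inherits a canonical monoid structure: its multiplication is the composite $F(\C)\otimes F(\C)\cong F(\C\times\C)\r F(\C)$ (applying $F$ to the multiplication of $\C$ and precomposing with the structure isomorphism of $F$), and its unit is $1 = F(\top)\r F(\C)$. The monoid axioms transport along the coherence isomorphisms of $F$. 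This is the exact analogue of the corollary already obtained for lex categories, where we observed that strongly monoidal functors preserve monoids.

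\emph{Second claim: the description of $F(\C)\multimap\D$.} Here $F(\C)\multimap\D$ is the cofreely parametric clan generated by $\D$, and I would obtain its description by direct appeal to Lemma \ref{enrichedAdjunctionClans}. That lemma supplies a natural isomorphism $U(F(\C)\multimap\D)\cong U(\D)^\C$, identifying the underlying category of $F(\C)\multimap\D$ with functors $\C\r\D$ and their natural transformations (using that a clan morphism out of $F(\C)$ is the same as a functor out of $\C$). The same lemma records that fibrations in $F(\C)\multimap\D$ are exactly the pointwise fibrations and that limits are computed pointwise. Since an isomorphism of clans is nothing more than an isomorphism of underlying categories respecting fibrations and limits on the nose, these three facts assemble into the asserted isomorphism of clans.

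The verifications are thus almost entirely bookkeeping, and the closest thing to an obstacle is simply making sure the monoid structure of the first claim genuinely lives in clans rather than merely in categories — that its structure maps are clan morphisms and that it satisfies the fibration condition demanded of a notion of parametricity. This is automatic: the structure maps arise as $F$ applied to functors, hence are clan morphisms after freely adding limits, and the required fibration condition is precisely the content of $F$ being monoidal for the monoidal structure on clans through which its monoidality factors. I would therefore reduce the proof to citing Proposition \ref{freeClanMonoidal} and Lemma \ref{enrichedAdjunctionClans}, omitting the routine coherence checks exactly as in the lex case.
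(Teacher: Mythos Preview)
Your proposal is correct and follows essentially the same approach as the paper: the paper's proof is the single sentence ``This is a reformulation of Lemma~\ref{enrichedAdjunctionClans},'' with the first claim left implicit since it follows immediately from the preceding Proposition~\ref{freeClanMonoidal} (exactly as in the lex case). You have simply made both citations explicit and added some commentary on coherence, which is harmless.
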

\begin{proof}
This is a reformulation of Lemma \ref{enrichedAdjunctionClans}.
\end{proof}

Now we introduce a genuinely new notion of parametricity.

\begin{definition}
Let $\square$ be a monoidal category with:
\begin{eqnarray}
\I&:&\square\\
d^0,d^1&:&\I\r 1
\end{eqnarray} 
such that any object in $\square$ is of the form $\I^{n}$ for some $n\geq 0$.

From this data we define a notion of parametricity for clans $\square_c$ as the free monoidal clan generated by:
\begin{itemize}
\item A monoidal functor from $\square$ to $\square_c$
\item The fact that we have a fibration:
\begin{eqnarray}
\langle d^0,d^1\rangle &:& \I\twoheadrightarrow 1\times 1
\end{eqnarray}
\end{itemize}
\end{definition}

\begin{definition}
A notion of parametricity for clans is called cubical if it comes from the previous definition.
\end{definition}

\begin{example}
The main example is the free monoidal clan generated by:
\begin{itemize}
\item An object $\I$.
\item A fibration $i:\I\twoheadrightarrow 1\times 1$.
\end{itemize}
It is called the clan of semi-cubes. This can be adapted to any variant of cubes.
\end{example}

\begin{remark}
The clan of semi-cubes corresponds to the usual parametricity for type theory, reformulated in the language of clans. Indeed the map:
\begin{eqnarray}
i &:& \I\r 1\times 1
\end{eqnarray}
is a fibration so that in any parametric clan $\C$ we have:
\begin{itemize}
\item For any object $\Gamma:\C$ we have a fibration:
\begin{eqnarray} 
a_*& \twoheadrightarrow &a\times a
\end{eqnarray}
Type-theoretically this means that any context $\Gamma\vdash$ comes with a relation:
\begin{eqnarray}
\Gamma,\Gamma\vdash\Gamma_*
\end{eqnarray}
\item For a fibration $a\twoheadrightarrow b$ in $\C$, we have a fibration:
\begin{eqnarray}
a_* &\twoheadrightarrow& b_*\times_{b\times b}(a\times a)
\end{eqnarray}
Type-theoretically this means that for $\Gamma\vdash A$ we have:
\begin{eqnarray}
\Gamma,\Gamma,\Gamma_*,A,A &\vdash& A_*
\end{eqnarray}
as is customary with parametricity.
\end{itemize} 
\end{remark}

We give some notations:

\begin{definition}
Given a clan $\C$ and a fibration $i : a\twoheadrightarrow b$ in $\C$, we define:
\begin{eqnarray}
a^n &:& \mathrm{Ob}_\C\\
\delta a^n &:& \mathrm{Ob}_\C\\
\delta i^n &:& a^n \twoheadrightarrow \delta a^n
\end{eqnarray}
by:
\begin{eqnarray}
a^0 &=& 1\\
a^{n+1} &=& a^n\otimes A
\end{eqnarray}
and:
\begin{eqnarray}
\delta a^0 &=& \top\\
\delta a^{n+1} &=& (\delta a^n)\otimes a \pullback{(\delta a^n)\otimes b} a^n\otimes b
\end{eqnarray}
with:
\begin{eqnarray}
\delta i^0 &=& \epsilon_1 : 1\r \top\\
\delta i^{n+1} &=& \delta i^n \odot i
\end{eqnarray}
where for:
\begin{eqnarray}
f &:& x\twoheadrightarrow x'\\
g &:& y\twoheadrightarrow y'
\end{eqnarray}
we write:
\begin{eqnarray}
f\odot g &:& x\otimes y \twoheadrightarrow (x\otimes y')\pullback{x'\otimes y'}(x'\otimes y)
\end{eqnarray}
\end{definition}

Now we can state the key lemma about cubical notions of parametricity. 

\begin{lemma}
\label{structureSquare}
Assume given a cubical notions of parametricity $\square_c$ built from $\square$, $\I$, $d^0$, $d^1$. The clan underlying $\square_c$ is the free clan generated by:
\begin{itemize}
\item A functor from the category $\square$ to $\square_c$.
\item The fact that $\delta i^n$ is a fibration for all $n\geq 0$, where $i=\langle d^0,d^1\rangle$.
\end{itemize}
\end{lemma}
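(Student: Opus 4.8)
The plan is to characterise the underlying clan of $\square_c$ by identifying its class of fibrations. As a \emph{free monoidal clan}, $\square_c$ has the following description of its fibrations: they form the smallest class $\mathcal{S}$ of maps containing the declared fibration $i=\langle d^0,d^1\rangle$, closed under the clan operations (isomorphisms, composition, pullback, maps to $\top$), and closed under the monoidal Leibniz product $\odot$ (this last closure being exactly the monoidal clan axiom, that $u\odot v$ is a fibration whenever $u,v$ are). On the other hand, writing $\mathcal{F}$ for the free clan generated by a functor $\square\r\square_c$ together with the assertion that each $\delta i^n$ is a fibration, the fibrations of $\mathcal{F}$ are by definition the clan-closure $\mathcal{C}$ of $\{\delta i^n:n\geq0\}$. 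The underlying categories-with-finite-limits of $\square_c$ and $\mathcal{F}$ agree: since $\otimes$ distributes over finite limits and $\I^n\otimes\I^m=\I^{n+m}$, every object of $\square_c$ is a finite limit of objects $\I^k$, so its underlying lex category is the finite-limit completion of $\square$, which is also that of $\mathcal{F}$. Thus the lemma reduces to the single equality $\mathcal{S}=\mathcal{C}$.

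First I would check the easy inclusion $\mathcal{C}\subseteq\mathcal{S}$ and that the transported conditions make sense. By induction each $\delta i^n$ is a fibration of $\square_c$: the base $\delta i^0=\epsilon_1:1\r\top$ is a map to $\top$, and $\delta i^{n+1}=\delta i^n\odot i$ is a Leibniz product of fibrations, hence a fibration. Unwinding the inductive definition of $\delta\I^n$ and using distributivity of $\otimes$ over limits also shows that $\delta\I^n$ is a finite limit of the $\I^k$ and that $\delta i^n:\I^n\r\delta\I^n$ is the comparison map of a finite diagram living in the image of $\square$; this is what makes the condition ``$\delta i^n$ is a fibration'' transportable along any functor $\square\r\mathcal{E}$ into a clan. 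Since each $\delta i^n\in\mathcal{S}$ and $\mathcal{S}$ is closed under the clan operations, the clan-closure $\mathcal{C}$ is contained in $\mathcal{S}$.

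The substance is the reverse inclusion $\mathcal{S}\subseteq\mathcal{C}$, for which it suffices to show that $\mathcal{C}$ is closed under $\odot$ (then $\mathcal{C}$ contains $i=\delta i^1$, is closed under the clan operations, and is closed under $\odot$, so by minimality of $\mathcal{S}$ we get $\mathcal{S}\subseteq\mathcal{C}$). I would prove closure under $\odot$ in two stages. First, \emph{compatibility of $\odot$ with the clan operations in each variable}: using that $\otimes$ distributes over finite limits, the operation $(\_)\odot v$ sends a pullback to a pullback, a composite to a composite of its images, an isomorphism to an isomorphism, and a map to $\top$ into $\mathcal{C}$, and symmetrically for $v\odot(\_)$. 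This reduces $u\odot v$, for arbitrary $u,v\in\mathcal{C}$, to the case of generators, so that $u\odot v$ lies in the clan-closure of the maps $\delta i^m\odot\delta i^n$. Second, \emph{associativity and unitality of $\odot$}: the map $\delta i^0=\epsilon_1$ is a unit for $\odot$ and $\delta i^n=i^{\odot n}$, so associativity of the Leibniz product gives $\delta i^m\odot\delta i^n=i^{\odot(m+n)}=\delta i^{m+n}$, a generator of $\mathcal{C}$. Combining the two stages yields $u\odot v\in\mathcal{C}$, hence $\mathcal{S}=\mathcal{C}$ and the lemma.

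The main obstacle I anticipate is the first stage: verifying that $\odot$ interacts correctly with pullbacks, composites and maps to $\top$ in each variable. These are the clan-theoretic analogues of the bookkeeping behind the pushout-product axiom, and although each follows formally from the distributivity of $\otimes$ over finite limits guaranteed by the monoidal clan structure, the pullback and composition cases require careful diagram chasing to exhibit the relevant objects as iterated pullbacks of images of the $\delta i^n$. By contrast the identity $\delta i^m\odot\delta i^n=\delta i^{m+n}$ is clean, being just associativity of a monoidal Leibniz construction together with the inductive definition $\delta i^{n+1}=\delta i^n\odot i$.
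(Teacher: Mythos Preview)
Your approach is essentially the paper's: reduce the statement to the equality of two classes of fibrations, show the easy inclusion $\mathcal{C}\subseteq\mathcal{S}$ inductively, and for the hard inclusion prove that $\mathcal{C}$ is closed under $\odot$, reducing via the clan operations to the identity $\delta i^m\odot\delta i^n\cong\delta i^{m+n}$. Two points deserve more care.

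First, the claim that the underlying categories of $\square_c$ and of $\mathcal{F}$ agree is not as immediate as you suggest: you must rule out that adjoining the fibration $i$ (and hence new pullbacks along it) and the closure under $\odot$ produces new objects not present in the free clan on the category $\square$. The paper handles this by passing through the intermediate description ``free clan generated by a morphism of clans from the free monoidal clan on $\square$, together with the $\delta i^n$ as fibrations'' and then invoking the strong monoidality of the free-clan functor $F$ (their Proposition~\ref{freeClanMonoidal}), which gives $U\overline{F}(\square)\cong F\overline{U}(\square)$. You should invoke this rather than arguing informally from distributivity.

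Second, and more importantly, you misidentify the obstacle in your first stage. The composite and pullback cases are indeed routine diagram chases, but the assertion that $(\_)\odot v$ ``sends a map to $\top$ into $\mathcal{C}$'' does \emph{not} follow from distributivity of $\otimes$ over limits. One computes $\epsilon_a\odot v\cong a\otimes v$, and showing $a\otimes v\in\mathcal{C}$ requires a further induction on the object $a$. The cases $a=\top$, $a=1$, $a=b\otimes c$, and $a$ a pullback are handled by the listed distributivity isomorphisms; but the base case $a=\I$ gives the map $\I\otimes v$, which factors as a pullback of $(1\times 1)\otimes v$ followed by $i\odot v$. Thus the $\epsilon_a$ case already needs the generator case $i\odot v\in\mathcal{C}$, so your two stages are not independent: the first stage must be organised as a \emph{simultaneous} reduction on both variables, bottoming out at $\delta i^m\odot\delta i^n$. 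This intertwining is exactly where the hypothesis that every object of $\square$ is a power of a single $\I$ is used (as the paper notes in the remark following the proof), so it is the heart of the argument rather than a bookkeeping step.
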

\begin{proof}
We will show that both considered clans are isomorphic to an intermediate free clan $\square'$ generated by:
\begin{itemize}
\item A morphism of clan from the free monoidal clan generated by $\square$ to $\square'$.
\item The fact that $\delta i^n$ is a fibration in $\square'$ for all $n\geq 0$.
\end{itemize}

First we check that asking a morphism of clan from the free monoidal clan generated by the monoidal category $\square$ is the same as asking for a functor from $\square$. Indeed the following square commutes up to natural isomorphism:
\[\xymatrix{
\Cat\ar[d]_F & \Mon\Cat\ar[l]_{\overline{U}}\ar[d]^{\overline{F}}\\
\Clan & \Mon\Clan\ar[l]^{U}
}\]
because $F$ is strongly monoidal by Lemma \ref{freeClanMonoidal}. So we have natural isomorphisms:
\begin{eqnarray}
\Hom_\Clan(U\overline{F}(\square),\D) &\cong& \Hom_\Clan(F\overline{U}(\square),\D)\\
&\cong& \Hom_\Cat(\overline{U}(\square),\D)
\end{eqnarray} 
This means that the intermediate clan $\square'$ is equivalent to the free clan generated by a functor from $\square$ and $\delta i^n$ a fibration for all $n\geq 0$.

Next we prove that in the free monoidal clan generated by the monoidal category $\square$ the following conditions are equivalent:
\begin{itemize}
\item $\delta i^n$ is a fibration for all $n\geq 0$.
\item $i$ is a fibration and fibrations are stable by $\_\odot \_$.
\end{itemize}
This will implies that $\square_c$ is isomorphic to the intermediate clan $\square'$.

The fact that the second condition implies the first is clear, as $\delta i^0$ is always a fibration and:
\begin{eqnarray}
\delta i^{n+1} &=& \delta i^n\odot i
\end{eqnarray}

For the converse we assume that $\delta i^n$ is a fibration for all $n\geq 0$, and we want to check that fibrations are closed by $\_\odot\_$.  We prove inductively on the fibration $\psi$ that for any fibration $f:x\twoheadrightarrow y$, we have that $f\odot \psi$ is a fibration:

\begin{itemize}

\item For $g:a\r b$ and $g':b\r c$ with $\psi=g'\circ g$, we have a commutative square:
\[\xymatrix{
x\otimes a \ar[rrrr]^{f\odot(g'\circ g)}\ar[d]_{f\odot g} &&& & (y\otimes a)\pullback{y\otimes c}(x\otimes c) \ar[d]^\cong\\
(y\otimes a)\pullback{y\otimes b}(x\otimes b)\ar[rrrr]_{(y\otimes a)\pullback{y\otimes b} (f\odot g')} &&&& (y\otimes a)\pullback{y\otimes b}(y\otimes b)\pullback{y\otimes c}(x\otimes c)
}\]
so that if $f\odot g$ and $f\odot g'$ are fibrations, then so is $f\odot (g'\circ g)$.

\item For $h : a\r b$ and $g : c\r b$ with $\psi=\pi_a$we have a pullback square:
\[\xymatrix{
a\pullback{b}c \ar[d]_{\pi_a}\ar[r]& c\ar[d]^g\\
a\ar[r]_h& b
}\]
and a commutative square:
\[\xymatrix{
x\otimes(a\pullback{b}c) \ar[r]^{f\odot\pi_a}\ar[d]_\cong & (x\otimes a) \pullback{y\otimes a} (y\otimes(a\pullback{b}c))\ar[d]^\cong\ \\
(x\otimes a)\pullback{x\otimes b}(x\otimes c) \ar[d]_{(x\otimes a)\pullback{x\otimes b}(f\odot g)} &  (x\otimes a)\pullback{y\otimes a}(y\otimes a)\pullback{y\otimes b} y\otimes c \ar[d]^\cong\\
(x\otimes a)\pullback{x\otimes b}(x\otimes b) \pullback{y\otimes b}(y\otimes c)\ar[r]_\cong & (x\otimes a)\pullback{y\otimes b}(y\otimes c)\\
}\]
so that if $f\odot g$ is a fibration, then so is $f\odot \pi_a$.

\item For $a$ an object we have for $\epsilon_a:a\r \top$ a commutative square:
\[\xymatrix{
x\otimes a \ar[rrr]^{f\odot\epsilon_a}\ar[d]_{f\otimes a} && &(y\otimes a)\pullback{y\otimes\top}(x\otimes\top)\ar[d]^\cong\\
y\otimes a \ar[rrr]_\cong && &(y\otimes a)\pullback{\top}\top
}\]
and then we reason inductively on $a$, using the following isomorphisms:
\begin{eqnarray}
f\otimes 1 &\cong& f\\
f\otimes(a\otimes b) &\cong& (f\otimes a)\otimes b\\
f\otimes\top &\cong& \id_\top\\
f\otimes(a\pullback{b}c) &\cong& (f\otimes a)\pullback{f\otimes b}(f\otimes c)
\end{eqnarray}
Finally we have a commuting triangle:
\[\xymatrix{
x\otimes\I\ar[rd]^{f\odot i} \ar[rr]^{f\otimes\I} & & y\otimes \I\\
 & (y\otimes\I)\pullback{y\otimes(1\times 1)}(x\otimes(1\times 1)) \ar[ur]^{\pi_{y\otimes\I}}& \\
}\]
so that if $f\odot i$ is a fibration then so is $f\otimes \I$, and in turn so is $f\odot \epsilon_a$ for all object $a$.

\item We did not treat the case of $\_\odot\_$ and $i$. Using the fact that $\_\odot\_$ is associative, and iterating the previous isomorphisms (and their analogues in the left variable), it is enough to prove that a map built exclusively from $i$ and $\_\odot\_$ is a fibration. But such a map is isomorphic to $\delta i^n$ from some $n>0$, which is assumed to be a fibration.

\end{itemize}
\end{proof}

\begin{remark}
The only part of the proof relying crucially on that fact that we have a cubical notion of parametricity is the step where we prove without using stability of fibrations by $\_\odot\_$ that if $f$ a fibration, so is $f\otimes a$.
\end{remark}

Now we want to analyse cofreely parametric models for cubical notions of parametricity.

\begin{definition}
\label{defReedyFibrations}
Let $\C$ be a clan, and $\square_c$ be a cubical notion of parametricity generated by $\square$.
\begin{itemize}
\item A functor $X:\C^\square$ is called fibrant if for all $n\geq 0$ we have an induced fibration:
\begin{eqnarray}
X(\I^n)\twoheadrightarrow \overline{X}(\delta\I^n)
\end{eqnarray}
where $\overline{X}$ extends $X$ by commuting with limits.

\item A morphism between:
 \begin{eqnarray}
X,Y &:& \C^\square
\end{eqnarray}
with $X$ and $Y$ fibrant is called a fibration if it is a pointwise fibration, and for all $n\geq 0$ we have fibrations:
\begin{eqnarray}
X(\I^n)\twoheadrightarrow \overline{X}(\delta\I^n)\pullback{\overline{Y}(\delta\I^n)}Y(\I^n)
\end{eqnarray} 
where $\overline{X}$ (resp. $\overline{Y}$) extends $X$ (resp. $Y$) by commuting with limits.
\end{itemize}
\end{definition}

\begin{remark}
It should be noted that:
\begin{eqnarray}
\overline{X}(\delta\I^n)
\end{eqnarray}
can be defined inductively by commuting with limits only if the maps:
\begin{eqnarray}
X(\I^k) &\r& \overline{X}(\delta i^k)
\end{eqnarray} 
are fibrations for $k<n$.
\end{remark}

\begin{remark}
For $\square$ (the opposite of) a Reedy category of cubes where faces are the only dimension-decreasing maps, a fibration in the previous sense is precisely a Reedy fibration in $\C^\square$, and a fibrant object in $\C^\square$ is precisely a Reedy fibrant object. This is the case for semi-cubes, or cubes with reflexivities only. On the other hand diagonals contradict this assumption.
\end{remark}

\begin{proposition}
Let $\C$ be a clan, and $\square_c$ be a cubical notion of parametricity generated by $\square$. Then:
\begin{eqnarray}
\square_c\multimap\C
\end{eqnarray}
is the clan of fibrant objects in $\C^\square$, equipped the fibrations from Definition \ref{defReedyFibrations}.
\end{proposition}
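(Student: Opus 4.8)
The plan is to reduce everything to the universal property of $\square_c$ established in Lemma \ref{structureSquare}, which presents the underlying clan of $\square_c$ as the free clan on a functor from the category $\square$ together with the requirement that each $\delta i^n$ be a fibration. A clan morphism $F:\square_c\r\C$ preserves limits, and each $\delta\I^n$ is by construction a finite limit of objects $\I^k$; hence $F$ is completely determined by its underlying functor $X:\square\r\C$, and $F(\delta\I^n)$ is exactly $\overline{X}(\delta\I^n)$ from Definition \ref{defReedyFibrations}. So giving $F$ amounts to giving $X$ such that each map $X(\I^n)\r\overline{X}(\delta\I^n)$ (the image of $\delta i^n$) is a fibration, that is, exactly a fibrant object of $\C^\square$. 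This identifies the objects on the two sides.

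For the remaining categorical structure I would argue as in the proof of Lemma \ref{enrichedAdjunctionClans}. A morphism of $\square_c\multimap\C$ is a natural transformation between clan morphisms $F,G:\square_c\r\C$; restricting to the generating category $\square$ yields a natural transformation $X\Rightarrow Y$, and conversely such a transformation extends uniquely along limits, by the same argument that natural transformations between limit-preserving functors are automatically limit-preserving and are determined by their restriction to generators, run through the arrow clan $\C^\r$ with pointwise fibrations. Thus morphisms match the natural transformations of the fibrant-object clan. Limits in $\square_c\multimap\C$ are pointwise by the definition of the internal hom, so they agree with pointwise limits of fibrant objects; and a pointwise limit of clan morphisms is again a clan morphism, so fibrant objects are automatically closed under them.

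The substantive step is matching the fibrations. By definition $\alpha:F\twoheadrightarrow G$ is a fibration in $\square_c\multimap\C$ iff for every fibration $u:c\twoheadrightarrow c'$ of $\square_c$ the relative matching map $F(c)\r F(c')\pullback{G(c')}G(c)$ is a fibration in $\C$. For the forward direction I would specialise $u$: taking the maps to $\top$ (always fibrations) gives, via $F(\top)=G(\top)=\top$, that each $\alpha(c)$ is a fibration, hence the pointwise clause; taking $u=\delta i^n:\I^n\twoheadrightarrow\delta\I^n$ gives exactly that $X(\I^n)\r\overline{X}(\delta\I^n)\pullback{\overline{Y}(\delta\I^n)}Y(\I^n)$ is a fibration, which is the second clause of Definition \ref{defReedyFibrations}. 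For the converse I would use a pullback-hom closure argument in the spirit of Lemma \ref{structureSquare}: for fixed $\alpha$, the class of maps $u$ of $\square_c$ for which the relative matching map is a fibration contains the isomorphisms, is stable under composition (the matching map of $u'\circ u$ factors as the matching map of $u$ followed by a pullback of the matching map of $u'$) and under pullback (the matching map of a pullback of $u$ is a pullback of the matching map of $u$), and contains every map to $\top$. By Lemma \ref{freeClans} together with the presentation of Lemma \ref{structureSquare}, the fibrations of $\square_c$ are generated from the maps to $\top$ (whose pullbacks yield the projections) and the declared fibrations $\delta i^n$ under precisely these operations; since the generators are covered by the two clauses, every fibration of $\square_c$ lies in the class, so $\alpha$ is a fibration.

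The main obstacle I expect is this last closure argument: one must be sure that the two clauses of Definition \ref{defReedyFibrations}, a pointwise condition and a condition on the $\delta i^n$, really suffice to generate relative-matching fibrancy against all fibrations of $\square_c$. This rests on having an accurate description of the fibrations of the free clan $\square_c$, namely their generation from the $\delta i^n$ and the terminal maps, which is the point where the cubical structure from Lemma \ref{structureSquare} is genuinely used; the stability of the relative matching map under composition and pullback is then a routine diagram chase of the same flavour as those appearing in the proof of Lemma \ref{structureSquare}.
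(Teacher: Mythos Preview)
Your treatment of objects and morphisms agrees with the paper's: both reduce to Lemma \ref{structureSquare} applied with target $\C$ (and, implicitly, with the arrow clan $\C^\r$). The difference lies in how fibrations are handled.

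The paper does \emph{not} attempt a closure argument over the fibrations of $\square_c$. Instead it introduces the free clan $\Delta$ generated by a single fibration, so that a clan morphism $\Delta\to\D$ is exactly a fibration in $\D$, and $\Delta\multimap\D$ is the clan whose objects are fibrations in $\D$ and whose fibrations are Reedy fibrant squares. Then the symmetry of the monoidal closed structure gives
\[
\Hom_\Clan(\Delta,\square_c\multimap\C)\ \cong\ \Hom_\Clan(\square_c,\Delta\multimap\C),
\]
so a fibration in $\square_c\multimap\C$ is the same as a clan morphism $\square_c\to\Delta\multimap\C$. Now Lemma \ref{structureSquare} applies a second time, with $\C$ replaced by $\Delta\multimap\C$, and unwinding the condition ``$\delta i^n$ is sent to a fibration in $\Delta\multimap\C$'' yields precisely the two clauses of Definition \ref{defReedyFibrations}.

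Your route is more direct but rests on the claim that the fibrations of $\square_c$ are generated from terminal maps and the $\delta i^n$ under isomorphism, composition and pullback. You invoke Lemma \ref{freeClans} for this, but that lemma only treats the free clan on a bare category; once extra fibrations are freely adjoined, no such description is provided in the paper, and establishing one is essentially the whole difficulty you flagged. The paper's $\Delta$-trick sidesteps this entirely: it never needs to know what the fibrations of $\square_c$ look like, only its universal property, which Lemma \ref{structureSquare} supplies. If you can complete the characterisation of fibrations in the free clan, your argument goes through and is arguably more explicit; but as written the gap you identify is real, and the paper's approach shows how to avoid it.
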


\begin{proof}
By Lemma \ref{structureSquare} giving an object in $\square_c\multimap\C$ (i.e. a morphism of clan from $\square_c$ to $\C$) is the same as giving a fibrant object in $\C^\square$.


Next we want to extend this to show that a fibration in $\square_c\multimap\C$ is a fibration in the sense of Definition \ref{defReedyFibrations}. We define $\Delta$ as the free clan generated by two objects and a fibration between them. For $\C$ a clan we have that $\Delta\multimap\D$ is the clan such that:
\begin{itemize}
\item Its objects are fibrations in $\D$.
\item Its morphisms are commutative squares.
\item Its fibrations are Reedy fibrant squares.
\end{itemize}

Then we have an isomorphism:
\begin{eqnarray}
\Hom_\Clan(\Delta,\square_c\multimap\C) &\cong& \Hom_\Clan(\square_c,\Delta\multimap\C)
\end{eqnarray}
so that, by Lemma \ref{structureSquare}, a fibration in $\square_c\multimap\C$ between $X$ and $Y$ is the same thing as:
\begin{itemize}
\item A functor from $\square$ to $\Delta\multimap\C$ with source $X$ and target $Y$. This is equivalent to a natural transformation $\alpha$ from $X$ to $Y$ restricted to $\square$.
\item Such that for any $n\geq 0$ we have that:
\[\xymatrix{
X(\I^n) \ar[r]^{\alpha(\I^n)}\ar[d]& \ar[d]Y(\I^n)\\
X(\delta\I^n)\ar[r]_{\overline{\alpha}(\delta\I^n)} & Y(\delta\I^n)
}\]
is Reedy fibrant, where $\overline{\alpha}$ extends $\alpha$ by commuting with limits. 
\end{itemize}
But this is precisely a fibration in the sense of Definition \ref{defReedyFibrations}.
\end{proof}

\begin{remark}
This allows to generate clans of Reedy fibrant semi-cubical and cubical objects. We conjecture that this result can be extended to all Reedy category with a compatible monoidal structure, using the monoidal clan $\square$ with dimension-decreasing maps as fibrations. It is unclear to us what \emph{compatible} should mean here. 
\end{remark}






\section*{Conclusion}

First we give a summary of the presented a result, and then possible directions for future work.

\subsection*{Summary}

We showed that notions of parametricity can be defined as monoidal models of type theory, and parametric models as modules. In one diagram:
\[\xymatrix{
\{\mathrm{Parametric\ models}\}\ar[rr]^{\mathrm{forgetful}} & & \ar@/^-2pc/[ll]_{\mathrm{free\ models}}\ar@/^2pc/[ll]^{\mathrm{cubical\ models}}\{\mathrm{Models\ of\ type\ theory}\}
}\]
We elaborate a bit:

\begin{itemize}
\item Assuming a symmetric monoidal closed category of models of type theory, a notion of parametricity was defined as a monoid $\square$ (i.e. a monoidal model of type theory). Then a parametric model was defined as a model of type theory with a $\square$-action, i.e. a $\square$-module. 
\item We gave explicit formulas for the left and right adjoints to the functor forgetting a module structure, yielding freely parametric models and cubical models.
\item We gave symmetric monoidal closed structures on the categories of categories, lex categories and clans, all of them seen as models of type theory.
\item We gave many examples of parametricity, explaining how to construct the following as cofreely parametric models of type theory:
\begin{itemize}
\item Categories of cubical objects (for any variant of cubes).
\item Lex categories of truncated cubical objects (for semi-cubes and cubes with reflexivities).
\item Clans of Reedy fibrant cubical objects (for semi-cubes and cubes with reflexivities).
\end{itemize}
\end{itemize}

\subsection*{Further work}

We list several limitations of this work, and offer directions to deal with them.

\begin{itemize}
\item An obvious technical limitation is the requirement of strict commutativity of limits in lex categories and clans. The proper way to remove this is to work with a $2$-category of models of type theory, removing the need for any strictness assumption. 

An alternative, less satisfying approach would be to extend well-known strictification results for monoidal categories to monoidal lex categories and monoidal clans.
\item We should construct symmetric monoidal closed structures on many other categories of models of type theory, including categories with families and comprehension categories. We want to design techniques allowing to do this efficiently.

We conjecture that all these monoidal closed structure can be build as some variant of Day convolution, using the fact that the lex categories classifying models of type theory are monoidal (for reasonable notions of model of type theory with $\Sigma$ and $\top$). This would imply that any model of type theory is a model internal to the category of models of type theory. This is analogous to the fact that any abelian group is an abelian group internal to the category of abelian groups.
\item It is unclear how to include $\Pi$ and $\U$ in this context. Indeed even in the relatively simple case of categories we cannot extend parametricity to exponentials. This would imply that the counit sending any cubical object to its underlying object of points commutes with exponentials, which fails in the presence of reflexivities (although this is true for semi-cubes, as seen in \cite{moeneclaey2021parametricity}).

However one can prove that when a category $\D$ has exponentials \emph{and enough limits}, the category $\D^\C$ has exponentials for any category $\C$. Perhaps this can be extended to a result giving $\Pi$ and $\U$ in any cubical model, assuming $\Pi$ and $\U$ together with \emph{enough inductive types} in the base model.
\item We cannot build clans with Kan fibrations as cofreely parametric models. Indeed for any clans $\square$ and $\D$, the fibrations in $\square\multimap\D$ are always defined by the requirement that some maps in $\D$ are fibrations, and never by the requirement that some terms exist, but these terms seem necessary to model the lifting in a Kan fibration. 

Nevertheless we think that Kan cubical models can be constructed in two steps: first construct cubical objects, then restrict the fibrations. We hope that the second step can be compactly formulated in our setting, even if it does not consist in adding a module structure. In particular we conjecture that there is a right adjoint to the forgetful functor:
\begin{eqnarray}
\{\mathrm{Parametric\ models\ with\ Kan\ liftings}\}&\r& \{\mathrm{Parametric\ models}\}\nonumber
\end{eqnarray}
The notion of \emph{parametricity with Kan liftings} should be equivalent to univalence, making formal the following correspondence:
\[\begin{tabular}{|c|c|c|}
\hline
Relation & Parametricity & Semi-cubes\\
Equivalence & Univalence & Kan cubes\\
\hline
\end{tabular}\]
This idea was already explored by Altenkirch and Kaposi \cite{altenkirch2017towards} without fully computing with univalence, and by Tabareau, Tanter and Sozeau \cite{tabareau2021marriage} assuming a univalent universe to begin with.
\end{itemize}

\bibliographystyle{plain}
\bibliography{parametricity_monoidal}

\end{document}